\documentclass[reprint,aps,prl,superscriptaddress,longbibliography,nofootinbib]{revtex4-2}

\usepackage{amsmath,amssymb,amsthm,mathtools}
\usepackage{braket}
\usepackage{xcolor}
\usepackage{graphicx}
\usepackage{hyperref}
\hypersetup{colorlinks=true,linkcolor=blue!60!black,citecolor=blue!60!black,urlcolor=blue!60!black}
\usepackage{bm}
\usepackage{orcidlink}

\usepackage{booktabs}
\usepackage{siunitx}
\usepackage{quantikz}

\newtheorem{theorem}{Theorem}
\newtheorem{lemma}[theorem]{Lemma}
\newtheorem{proposition}[theorem]{Proposition}
\newtheorem{corollary}[theorem]{Corollary}
\theoremstyle{definition}
\newtheorem{example}[theorem]{Example}

\newcommand{\Id}{\mathbf{1}}
\newcommand{\II}{\mathcal I}
\newcommand{\Tr}{\operatorname{Tr}}
\newcommand{\PhiP}{\Phi^{+}}
\newcommand{\Dn}{\mathcal{D}_{\eta}}
\newcommand{\Mn}{\mathcal{M}}
\newcommand{\Rep}{\mathcal R}
\newcommand{\EAC}{\mathrm{EAC}}

\begin{document}

\title{Full-Rank Noise Forbids Long-Range Entanglement Swapping}

\author{Xuan Du Trinh\,\orcidlink{0009-0009-5610-462X}}
\email{xtrinh@cs.stonybrook.edu}
\affiliation{Stony Brook University, Stony Brook, New York 11794, USA}

\author{Nicholas Pardave}
\email{pardan@rpi.edu}
\affiliation{Department of Computer Science, Rensselaer Polytechnic Institute, Troy, New York 12180, USA}

\author{Angie Huang}
\email{angiehuang@college.harvard.edu}
\affiliation{Harvard University, Cambridge, Massachusetts 02138, USA}

\author{Xiangyi Meng\,\orcidlink{0000-0001-5184-7648}}
\email{xmenggroup@gmail.com}
\affiliation{Department of Physics, Applied Physics, and Astronomy, Rensselaer Polytechnic Institute, Troy, New York 12180, USA}%
\affiliation{Network Science and Technology Center, Rensselaer Polytechnic Institute, Troy, New York 12180, USA}

\author{Nengkun Yu}
\email{nengkun.yu@cs.stonybrook.edu}
\affiliation{Stony Brook University, Stony Brook, New York 11794, USA}

\date{\today}

\begin{abstract}
Quantum repeaters extend entanglement by swapping noisy elementary
links. We prove that full-rank noise forbids this at long range:
for any entangled full-rank two-qubit link, there is a finite depth
beyond which no end-to-end entanglement can be established
regardless of the measurement outcomes on intermediate qubits, even
under any adaptive postselected strategy. This limit is set by
one spectral parameter of the link, giving a no-go
criterion for repeater routing. In contrast, we construct link state families of rank three and
rank two that admit postselected measurement outcome branches of exponentially small
probability but with strictly positive concurrence at every finite
depth. Swapping experiments on a superconducting processor show
that links of equal initial concurrence but different rank behave
differently under postselected swapping. In the language of many-body physics, the chain is a
matrix-product density operator, and full-rank bonds forbid
long-range localizable entanglement, while rank-deficient bonds
can sustain it.

\end{abstract}

\maketitle

Entanglement shared between distant nodes is a central resource for the
quantum internet: it enables secure communication, distributed
quantum computing, and entanglement-enhanced measurements
between remote sensors~\cite{Kimble2008,WehnerElkoussHanson2018}. Because the
performance of direct quantum communication through noisy channels
deteriorates exponentially with distance, quantum repeaters seek to
establish long-range entanglement by dividing the channel into shorter
elementary links and connecting them through repeated entanglement
swapping~\cite{briegel1998quantum,dur1999quantum}. If every
elementary link were a perfect Bell pair, each swap would reproduce a
Bell pair on the outer qubits, and the chain could be extended
indefinitely~\cite{Swapping-original-Ekert,PanSwapping1998}. Real links, however, are
noisy. Each swap propagates the residual entanglement and
accumulates the imperfections of the constituent links, so the
end-to-end state deteriorates with chain depth. The central issue is
not merely how rapidly this entanglement decays, but whether it can
remain nonzero at every finite depth or instead disappear after a
finite number of swaps.

The distinction between gradual decay and separability at a
finite depth is operational: weakly
entangled pairs may be distilled, but once the shared state
becomes separable, there is no entanglement left to
recover~\cite{HHH1997distill}. For isotropic depolarized links under standard
Bell-measurement swapping, the effective depolarizing parameters
multiply along the chain~\cite{dur1999quantum,SenDe2005swapping}, so
the end-to-end state becomes separable at a finite depth.
For arbitrary noisy links, a single swap has been characterized
in closed form~\cite{PhysRevA.94.012336}. Both studies consider the
standard Bell-measurement protocol.

We consider the broader class of adaptive postselected
swapping protocols: each
intermediate node in the chain jointly measures its two qubits
(Fig.~\ref{fig:chain}) and may postselect any pure entangled
outcome. The outcome need not be
maximally entangled as in the standard protocol, and its
success probability may be arbitrarily small.
Each outcome record defines a postselected measurement outcome
branch. The measurements may proceed sequentially, with later choices
adapted to earlier results, or in parallel, with the accepted branch
selected after the complete set of classical records is known. The
strategy may pursue any objective.

Three observations raise our research question. First, the
finite-depth separability reviewed above is established only for
the isotropic depolarizing model. In practice, the
noise model of a real elementary link is rarely
known exactly. Noise from the environment causes
errors in every direction, without being isotropic. Full-rank links are therefore more
generic and arguably more realistic. Second, the link can also be of lower rank, and at the lowest
rank, pure entangled links clearly can keep entanglement alive at
every depth~\cite{Swapping-original-Ekert,PhysRevA.94.012336}. Third, postselection by local filtering can increase the
entanglement of a single link~\cite{Verstraete2001}, so adaptive
postselected protocols have much more freedom than the standard
protocol to deliver an entangled end-to-end state. As a natural
consequence, we ask:
\emph{can an adaptive postselected swapping protocol establish
end-to-end entanglement at arbitrarily large depth when the
elementary link is of full rank, and how does the answer change at
lower ranks?}

We answer the question completely. For full-rank links, the
answer is \emph{no}: for
every such entangled elementary link, there is a finite depth
beyond which the end-to-end state is separable for any
adaptive postselected swapping protocol (Theorem~\ref{thm:rank4}). We call the smallest such depth
the entanglement horizon of the link. The theorem gives an explicit upper bound on
the horizon, depending only on the smallest eigenvalue of the link's
density matrix, and the bound is uniform over all postselected
branches. The bound also applies beyond entanglement swapping to every
finite-round protocol of stochastic local operations and classical
communication (SLOCC) that uses one copy of each link, allowing the
nodes to perform local operations, exchange classical messages, and
postselect outcomes (Corollary~\ref{cor:slocc-horizon}).

For lower ranks, the answer is \emph{yes}: low-rank entangled link families can
support a postselected branch that has exponentially small
probability, but the end-to-end entanglement survives at every
finite depth.
We consider the qubit amplitude damping and pure dephasing
channels, together with their composition. Applied to one qubit of the Bell state $\PhiP$, they generate families where the composition gives links of rank three,
and either channel alone gives links of rank two. We prove that for these rank-deficient links, the swapping chain that
postselects the $\Phi^\pm$ branches produces an entangled
end-to-end state at every finite depth, with total accepted probability $(1/2)^N$
(Example~\ref{ex:lowrank}).

\begin{figure}[t]
\centering
\includegraphics[width=0.9\columnwidth,page=1]{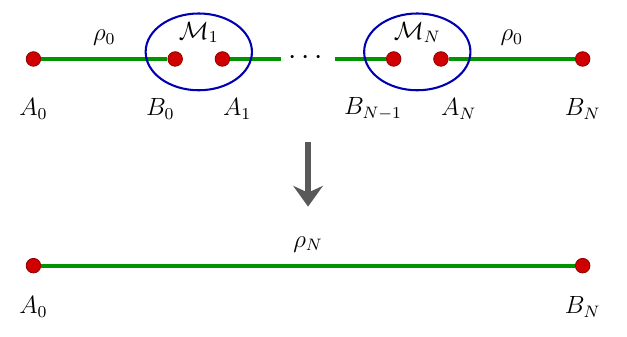}
\caption{Postselected swapping chain. The $N+1$ links $\rho_0$
occupy the qubit pairs $(A_i,B_i)$, and intermediate node $i$
postselects the entangled outcome
$\ket{\Psi_i}=(\II\otimes M_i)\ket{\PhiP}$ on its pair
$(B_{i-1},A_i)$. Tracing out the measured qubits leaves the state
$\rho_N$ of the boundary pair $A_0B_N$, in which
each outcome enters as the filter map
$\mathcal{M}_i(X)=M_i^*XM_i^T$. The outcomes $\ket{\Psi_i}$ need not be
Bell states or maximally entangled. Postselection keeps only
the branch with the intended record $\{M_i\}$, and the rule choosing
each measurement may adapt to the outcomes already observed.}
\label{fig:chain}
\end{figure}

Together, the results trace the contrast between full-rank and
rank-deficient links to a single mechanism. In a full-rank
link, the smallest eigenvalue supplies a noise component that
accumulates along the chain and eventually makes every postselected
output separable. In a rank-deficient link, no such component
exists, and its absence opens the possibility of postselecting a
rare branch that survives at every depth. The existence of an entanglement horizon is therefore
sensitive to the support of the link and is not determined by the
amount of entanglement alone: one link may have a finite horizon
while another of equal concurrence evades it. We also
test the rank contrast on a superconducting processor, where links
of nearly equal initial concurrence follow different decay
trajectories.

\paragraph*{Setup.}
Consider a chain of $N+1$ copies of an elementary link $\rho_0$
shared by $N+2$ nodes, where the $N$ intermediate nodes perform the
swapping measurements and the two end nodes hold the boundary
qubits $A_0$ and $B_N$ (Fig.~\ref{fig:chain}). Every two-qubit link
can be written as the Choi operator of a single-qubit completely
positive (CP) map,
$\rho_0=(\II\otimes\mathcal{E})(\PhiP)$, where
$\PhiP:=\ket{\PhiP}\!\bra{\PhiP}$ is the Bell-state
projector~\cite{Jamiolkowski1972,Choi1975}. For a postselected pure
entangled swap outcome
$\ket{\Psi_i}=(\II\otimes M_i)\ket{\PhiP}$ at node $i$,
normalized by
$\Tr(M_i^\dagger M_i)=2$, define the CP filter map
$\mathcal{M}_i(X):=M_i^*XM_i^T$. The unnormalized postselected
end-to-end state $\widetilde\rho_N$ on $A_0B_N$ is then
proportional to the Choi operator of the composite CP map, as proved in
Sec.~\ref{sm:onestep} of the Supplemental Material
(SM):
\begin{equation}\label{eq:chain}
\widetilde\rho_N \;\propto\;
(\II\otimes\mathcal{N})(\PhiP),\qquad
\mathcal{N}
=
\mathcal{E}\circ\mathcal{M}_N\circ\mathcal{E}\circ\cdots
\circ\mathcal{M}_1\circ\mathcal{E}.
\end{equation}

\paragraph*{Reduction to the isotropic chain.}
Full rank means that the smallest eigenvalue $\mu_{\min}(\rho_0)$ is
strictly positive.
Write $\lambda:=4\mu_{\min}(\rho_0)\in(0,1)$
($\lambda<1$, since $\lambda=1$ would make $\rho_0=I_4/4$
separable).
Then $\lambda$ is the largest weight of the maximally mixed
state $I_4/4$ that can be subtracted from $\rho_0$ while leaving a
positive operator in the rest, and peeling it off gives
$\rho_0=\lambda\,I_4/4+(1-\lambda)\sigma$.
The link therefore carries a noise floor $\lambda$.
The state $\sigma$ must be entangled to keep $\rho_0$
entangled. Wootters' equal-concurrence
ensemble~\cite{wootters1998entanglement} decomposes
$\sigma=\sum_i p_i\ket{\psi_i}\!\bra{\psi_i}$, in which every
Wootters piece $\ket{\psi_i}$ carries the same concurrence
$C(\psi_i)=C(\sigma)>0$. Folding the noise back into each term gives
$$
\rho_0=\sum_i p_i\,\tau_{\psi_i},
\qquad
\tau_{\psi_i}:=(1-\lambda)\ket{\psi_i}\!\bra{\psi_i}+\tfrac{\lambda}{4}I_4.
$$

Consider the isotropic state whose depolarizing parameter
$\eta$ is set by the noise floor $\lambda$:
\begin{equation}\label{eq:eta-of-lambda}
\omega_\eta:=\eta\,\PhiP+(1-\eta)\frac{I_4}{4},
\qquad
\eta:=\frac{2(1-\lambda)}{2-\lambda}\in(0,1).
\end{equation}
In SM Sec.~\ref{sm:slocc}, we construct a CP map
$\mathcal E_{\psi_i}$ with explicit product Kraus operators such
that $\tau_{\psi_i}=\mathcal E_{\psi_i}(\omega_\eta)$. Since the Wootters pieces are pure states of equal
concurrence, the $\tau_{\psi_i}$ are all equivalent under local
unitaries.
We now show that the isotropic chain emerges already at the
level of the swapping chain's full-rank input. Write
$\vec i=(i_0,\dots,i_N)$ for the Wootters piece indices of the
$N+1$ links and $p_{\vec i}:=\prod_kp_{i_k}$. Then
$$
\rho_0^{\otimes(N+1)}
=\sum_{\vec i}p_{\vec i}
\bigotimes_{k=0}^{N}\tau_{\psi_{i_k}}
=
\sum_{\vec i}p_{\vec i}\,
\Bigl(\bigotimes_{k=0}^{N}\mathcal E_{\psi_{i_k}}\Bigr)
\bigl(\omega_\eta^{\otimes(N+1)}\bigr).
$$
Each summand on the right is the same isotropic chain input
$\omega_\eta^{\otimes(N+1)}$ processed by the maps
$\mathcal E_{\psi_{i_k}}$ on the individual links. The chain
of postselected swaps is linear in the joint input
$\rho_0^{\otimes(N+1)}$, so it acts on the sum
summand by summand: if no summand yields an entangled end-to-end
output, the full output state is separable. The maps $\mathcal E_{\psi_{i_k}}$ have a special
construction (SM Sec.~\ref{sm:slocc}): the Kraus operators are
product, the first one acts on qubit $A_k$ only, while the remaining
ones annihilate $\PhiP$. This construction splits the full output state
$\widetilde\rho_N$ of Eq.~\eqref{eq:chain} into two parts,
$\widetilde\rho_N=\widetilde\rho_N^{(0)}+\widetilde\rho_N^{(1)}$,
where $\widetilde\rho_N^{(1)}$ is separable and
$\widetilde\rho_N^{(0)}$ is a nonnegative combination of
isotropic-chain outputs, each of the form of
Fig.~\ref{fig:chain} run on $N+1$ copies of $\omega_\eta$ (with a
different branch of measurement outcomes $\{M_i\}$ for
each, SM Secs.~\ref{sm:slocc}--\ref{sm:caseA}). Therefore, to prove that $\widetilde\rho_N$ is separable at
large enough depth, it suffices to prove it for each isotropic-chain
output: the full-rank problem reduces to the isotropic chain. We
proceed with the following mechanism.

\paragraph*{Mechanism: weak entanglement killed by local depolarization.}
The isotropic-link channel is the qubit depolarizer
$\Dn=\eta\,\mathrm{id}+(1-\eta)\Rep$: with probability $\eta$ it
transmits the qubit untouched, and with probability $1-\eta$ the
replacement $\Rep(X)=\Tr(X)I_2/2$ resets it to the maximally mixed
state, so that $\omega_\eta=(\II\otimes\Dn)(\PhiP)$. The isotropic chain's output is given by Eq.~\eqref{eq:chain} with the
link channel $\mathcal E=\Dn$. Denote its end-to-end state by
$\widetilde\rho_N^{\mathrm{iso}}$ and its composite map by
$$
\mathcal N_{\mathrm{iso}}
=\Dn\circ\Mn_N\circ\Dn\circ\cdots\circ\Mn_1\circ\Dn.
$$
Expanding $\mathcal N_{\mathrm{iso}}$ into elementary terms
and collecting them (SM Secs.~\ref{sm:lastR}--\ref{sm:product})
gives the decomposition
\begin{equation}\label{eq:state-grouped}
\widetilde\rho_N^{\mathrm{iso}}\;\propto\;
(\II\otimes\mathcal N_{\mathrm{iso}})(\PhiP)
=\eta^{N+1}P_K+S_N,
\end{equation}
where $P_K=(\II\otimes K)\PhiP(\II\otimes K)^\dagger$ is the pure
component filtered by the composite swap
$K=M_N^*\cdots M_1^*$, and $S_N$ is an unnormalized separable
component. The end-to-end state is separable iff its partial transpose
is positive semidefinite, by the Peres--Horodecki
criterion~\cite{Peres1996,Horodecki1996}. The partial
transpose of the pure term $\eta^{N+1}P_K$ has exactly one
negative eigenvalue, $-\eta^{N+1}\Gamma_K/2$, with
$\Gamma_K:=|\det K|=\prod_i|\det M_i|\le 1$
(SM Sec.~\ref{sm:bellblock}). The separable component $S_N$
collects product terms with positive partial transposes, whose
weights along the corresponding eigenvector may compensate
that negativity. We prove that once
$\eta^{N+1}\Gamma_K/2$ is small enough, these terms
compensate the negative eigenvalue completely, part by part,
each up to its entanglement absorption capacity in the
language of Ref.~\cite{eac-paper}, and the whole
partial transpose becomes positive semidefinite. Specifically, the end-to-end state
is separable whenever $\eta^{N+1}\le 1/3$, uniformly over every
realized record $\{M_i\}$ (SM Sec.~\ref{sm:cancellation}), so
the bound covers every branch of any sequential, parallel, or
adaptive strategy.
The interpretation of the mechanism is intuitive: in the
decomposition of the end-to-end state, Eq.~\eqref{eq:state-grouped},
the depolarizers supply the product terms when $N$ increases, while
the entangled pure component shrinks exponentially as $\eta^{N+1}$.
When the only pure entangled component is weak enough, the
end-to-end state becomes separable.

\begin{theorem}[Full-rank entanglement horizon]\label{thm:rank4}
For every entangled full-rank two-qubit elementary link $\rho_0$,
the postselected end-to-end state of the swapping chain of $N+1$
copies is separable for every realized record of pure entangled
intermediate outcomes whenever
\begin{equation}\label{eq:horizon}
\eta^{\,N+1}\ \le\ \frac{1}{3},
\end{equation}
where $\eta\in(0,1)$ is given by Eq.~\eqref{eq:eta-of-lambda}.
Consequently, under any adaptive postselected swapping protocol,
a finite entanglement horizon exists.
\end{theorem}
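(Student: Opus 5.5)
The proof has two stages, following the structure laid out above. First, reduce the full-rank problem to the isotropic chain. Second, prove a uniform separability criterion for the isotropic chain.

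\emph{Reduction.} Write $\rho_0=\lambda I_4/4+(1-\lambda)\sigma$ with $\lambda=4\mu_{\min}(\rho_0)$. Take Wootters' equal-concurrence decomposition of $\sigma$, so that $\rho_0=\sum_ip_i\tau_{\psi_i}$. For each piece, construct the CP map $\mathcal E_{\psi_i}$ with product Kraus operators satisfying $\tau_{\psi_i}=\mathcal E_{\psi_i}(\omega_\eta)$, where $\eta=2(1-\lambda)/(2-\lambda)$. The natural candidate for the first Kraus operator is a local filter $I\otimes F$ that takes $\PhiP$ to $\sqrt{1-\lambda'}\ket{\psi_i}$; this is possible because $\psi_i$ has Schmidt coefficients determined by $C(\sigma)$. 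The remaining Kraus operators are product operators that annihilate $\PhiP$ and top up the identity component to weight exactly $\lambda/4$. The choice of $\eta$ is fixed by matching the ratio of the pure and identity weights. In the worst case $C(\sigma)=1$, the filter is trivial, and one checks that $\eta$ as defined is the largest admissible value.

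Next, expand $\rho_0^{\otimes(N+1)}$ summand by summand and push each product Kraus operator through the swapping chain. The Kraus operator that acts locally (on $A_k$ only) simply gets absorbed into redefined filters $M_i'$ and boundary local operators. These redefined filters are still pure entangled outcomes up to normalization, which is harmless for separability. The Kraus operators annihilating $\PhiP$ are orthogonal to the Bell component, so on those terms the link behaves like a product-type replacement, and the output factorizes across that link. This gives the split $\widetilde\rho_N=\widetilde\rho^{(0)}_N+\widetilde\rho^{(1)}_N$, with $\widetilde\rho^{(1)}_N$ separable and $\widetilde\rho^{(0)}_N$ a positive combination of isotropic-chain outputs (with local filters at the ends, which preserve separability).

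\emph{Isotropic chain.} Expand $\mathcal N_{\mathrm{iso}}$ using $\Dn=\eta\,\mathrm{id}+(1-\eta)\Rep$. Every term containing at least one $\Rep$ factorizes into a product operator $A\otimes B$, since $\Rep$ cuts the chain. Only the all-identity term survives as the entangled part, and it equals $\eta^{N+1}P_K$. Grouping terms by the position of the last (or first) $\Rep$ yields Eq.~\eqref{eq:state-grouped}, with $S_N$ an explicit sum of positive product operators. The partial transpose of $P_K$ is, up to local change of basis, the partial transpose of a pure state with Schmidt coefficients those of $K$. Its single negative eigenvalue is $-\eta^{N+1}|\det K|/2$, and its eigenvector is the antisymmetric-type vector $v$ in the Schmidt basis.

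The main obstacle is to show that $(P_K)^{\Gamma}\eta^{N+1}+S_N^{\Gamma}\ge0$ for every record. The plan is to use only a few of the product terms in $S_N$: those with a single $\Rep$ at the first link or the last link. These give $(1-\eta)\eta^{N}\bigl(\frac{I}{2}\otimes K'K'^\dagger\text{-type}\bigr)$ and its mirror. I would compare them on the $2\times2$ block spanned by $v$ and the positive eigenvectors it couples to, and use a Cauchy--Schwarz/AM--GM bound $\Gamma_K\le$ the product of the relevant diagonal weights. In that comparison, each such term absorbs negativity up to a capacity proportional to its trace overlap. The goal is an inequality of the form $\eta^{N+1}\Gamma_K/2\le c\,(1-\eta^{N+1})\times(\text{normalized weights})$, which I expect to reduce, after normalization by $\Tr(M_i^\dagger M_i)=2$ and $\Gamma_K\le1$, to $\eta^{N+1}\le1/3$. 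The constant $1/3$ suggests it is the isotropic PPT threshold applied to the effective composite. If the direct block comparison is too lossy, the fallback is to sum the positive-weight contributions of all $\Rep$ positions, each bounded below via $\Tr(M_i^\dagger M_i)=2$. Since the bound is uniform in $\{M_i\}$, adaptivity is irrelevant, and finiteness of the horizon follows because $\eta<1$.
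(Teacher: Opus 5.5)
Your reduction to the isotropic chain follows the paper, and so does your expansion of $\mathcal N_{\mathrm{iso}}$ into $\eta^{N+1}P_K$ plus product replacement terms. One aside: the same $\eta$ serves every Wootters piece, since the construction gives $\mathcal E_\psi(\PhiP)=\tfrac{2-\lambda}{2}\ket{\psi}\!\bra{\psi}$ and $\mathcal E_\psi(I_4)=(2-\lambda)I_4$ for every $\psi$. There is therefore no worst case over $C(\sigma)$.

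The gap is in the compensation step, and your primary plan there provably cannot reach the threshold. Work in the frame aligned with the SVD $K=URV^\dagger$, $R=\mathrm{diag}(\alpha,\beta)$. There the single-replacement terms at the first and last link are $\tfrac{w}{4}(I\otimes R^2+R^2\otimes I)$, with $w=(1-\eta)\eta^{N}$. For unitary filters ($\alpha=\beta=1$) they sum to $\tfrac{w}{2}I_4$, while $\eta^{N+1}\Omega_R$ has the eigenvalue $-\eta^{N+1}/2$. The best these two terms can certify is therefore $(1-\eta)\eta^N\ge\eta^{N+1}$, i.e.\ $\eta\le\tfrac12$, independently of $N$. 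That gives no horizon at all for $\eta>\tfrac12$. The factor $1-\eta^{N+1}$ in your target inequality is the total weight of \emph{all} replacement terms. It is exactly the growth of this weight toward one with depth that beats $\eta^{N+1}$, so it cannot come from two terms.

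Your fallback, summing over all replacement positions, is the right direction. It lacks the two ingredients that make it work.

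First, you need a capacity statement for a general product block. The intermediate $G_k,H_k$ are not diagonal in the Schmidt frame of $K$, so a $2\times2$ comparison on the negative eigenvector does not suffice. The paper uses the Bell-block inequality $A\otimes B^T+2\sqrt{\det A\det B}\,(\PhiP)^{T_B}\succeq0$, proved by a Schur-complement computation, in the aligned form $G\otimes H^T+\tfrac{2\sqrt{\det G\det H}}{\Gamma_K}\Omega_R\succeq0$. The absorbing power is a geometric mean (a determinant). It is not the overlap $\bra{v}G\otimes H^T\ket{v}$, which is an arithmetic mean and overestimates it.

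Second, you need a record-uniform lower bound $4\sqrt{\det G_k\det H_k}\ge\Gamma_K$ for every $k$, i.e.\ capacity at least $\tfrac12$. This follows from $\det(M^*XM^T)=|\det M|^2\det X$ together with $\det\Dn(Y)\ge\det Y$ for $Y\succeq0$. If you keep the fully expanded terms instead, the scalars coming from segments between two replacements are $\tfrac12\Tr(P^\dagger P)\ge|\det P|$ by AM--GM, which gives the same bound. The normalization $\Tr(M_i^\dagger M_i)=2$ cannot play this role: the PPT test is homogeneous in each $M_i$, so rescaling filters changes nothing.

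With capacity at least $\tfrac12$ everywhere, the absorbed coefficient of $\Omega_R$ is at least $\tfrac12(1-\eta^{N+1})$. The condition $\eta^{N+1}\le\tfrac12(1-\eta^{N+1})$ is exactly $\eta^{N+1}\le\tfrac13$.
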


\begin{corollary}[Single-copy SLOCC horizon]
\label{cor:slocc-horizon}
On the same chain of $N{+}1$ links, every nonzero-probability
end-to-end state of a finite-round single-copy SLOCC protocol at
the nodes, not limited to entanglement swapping, is separable
whenever condition~\eqref{eq:horizon} holds. The model and proof are in SM
Sec.~\ref{sm:slocc-reduction}.
\end{corollary}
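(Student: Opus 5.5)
The plan is to reduce the corollary to Theorem~\ref{thm:rank4} by showing that any finite-round single-copy SLOCC protocol on the chain produces, in each nonzero-probability branch, an end-to-end state that is a nonnegative combination of postselected swapping-chain outputs and separable terms. First fix the model. Each node holds at most two qubits, one from each adjacent link; the end nodes hold $A_0$ and $B_N$. A branch of a finite-round protocol with classical communication is specified by the realized sequence of Kraus operators. Since operators acting on different nodes commute and classical messages only select which operator is applied next, the unnormalized branch state is $(\bigotimes_{\text{nodes}} L_v)\,\rho_0^{\otimes(N+1)}(\bigotimes_v L_v)^\dagger$, where each $L_v$ is the product of that node's realized Kraus operators. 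For an intermediate node $i$, $L_i$ maps the pair $B_{i-1}A_i$ to whatever output it keeps. For an end node, $L_v$ maps to the qubit it keeps.

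Next, handle the intermediate nodes. Since only $A_0B_N$ is retained, I would write $L_i^\dagger L_i=\sum_j \ket{\phi_{ij}}\!\bra{\phi_{ij}}$ as a spectral decomposition after tracing out node $i$'s outputs. The branch state is then a nonnegative sum over $\vec j$ of chains in which node $i$ postselects the unnormalized pure outcome $\ket{\phi_{ij_i}}$. Every two-qubit vector can be written as $(\II\otimes M)\ket{\PhiP}$ for some $2\times2$ matrix $M$. When $M$ is invertible, rescaling it to $\Tr(M^\dagger M)=2$ gives exactly the entangled swap outcome covered by Eq.~\eqref{eq:chain}, and Theorem~\ref{thm:rank4} makes that summand separable when $\eta^{N+1}\le1/3$. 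When $M$ has rank one, $\ket{\phi}$ is a product vector. Projecting onto it breaks the chain into two independent halves, so the summand factorizes over the $A_0$ side and the $B_N$ side and is separable. A zero $M$ contributes nothing.

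Then treat the end nodes. Their local operators $L_{A}$ and $L_{B}$ act as a local filter $L_A\otimes L_B$ on $A_0B_N$, possibly followed by partial traces. Local filtering preserves separability, and so does discarding subsystems after local maps, so each summand stays separable. If an end node keeps ancillas, the output is $(L_A\otimes L_B)\sigma(L_A\otimes L_B)^\dagger$ with $\sigma$ separable, which is separable across the two parties. Summing nonnegative separable terms gives a separable state, and normalizing a nonzero-probability branch preserves this.

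The main obstacle I expect is justifying the branch factorization $\bigotimes_v L_v$ under adaptive multi-round communication. The concern is that later operators depend on earlier outcomes at other nodes. For a single fixed realized record, however, the dependence is already resolved, and each node's operators act on its own systems, so they can be reordered node by node. A secondary care point is that Theorem~\ref{thm:rank4} must hold for arbitrary invertible $M_i$, not only normalized ones. This follows because the bound is uniform over records and separability is scale-invariant.
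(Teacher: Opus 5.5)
Your proposal is correct and follows essentially the same route as the paper's proof in SM Sec.~\ref{sm:slocc-reduction}. You write every Kraus term of the protocol as a product operator $F_\alpha\otimes E_{1,\alpha}\otimes\cdots\otimes E_{N,\alpha}\otimes G_\alpha$, and you resolve each intermediate node into rank-one contractions by unnormalized pure vectors on $B_{k-1}A_k$. The paper uses $E_{k,\alpha}^\dagger\ket{r_k}$ over an output basis, while your spectral decomposition of $L_i^\dagger L_i$ is equivalent. You then apply Theorem~\ref{thm:rank4} when every outcome is entangled, and note that any product outcome or local endpoint operator leaves the term separable. One point to tighten: a recorded outcome may have several Kraus operators, so a branch state is a nonnegative sum of your single-Kraus-sequence terms, which your summation of separable terms already covers.
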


Because $\eta<1$, condition~\eqref{eq:horizon} is met at the
finite depth $N+1\ge\log 3/\log(1/\eta)$ and remains true at every
greater depth. This depth therefore gives an explicit branch-independent
upper bound on the entanglement horizon.
For example, a link with noise floor $\lambda=10^{-2}$ cannot
carry entanglement beyond roughly $220$ swaps. For a particular link or branch,
the end-to-end state may become separable earlier.
Theorem~\ref{thm:rank4} and Corollary~\ref{cor:slocc-horizon}
extend to heterogeneous full-rank links
$\{\rho_i\}_{i=0}^{N}$: with $\lambda_i:=4\mu_{\min}(\rho_i)$ and
$\eta_i:=2(1-\lambda_i)/(2-\lambda_i)$, the same conclusions
hold whenever $\prod_{i=0}^{N}\eta_i\le 1/3$.
This branch-independent entanglement horizon bound is a property of the
noisy links and is not specific to a particular protocol.
Purification does not open an escape route from the horizon. Any
finite-round SLOCC protocol on finitely many copies of a full-rank
link returns a pair that is separable or again of full rank
(SM Sec.~\ref{sm:slocc-reduction}), so a chain of purified links
again obeys Theorem~\ref{thm:rank4}. For that reason, purification may
reduce the noise floor and extend the horizon, but cannot remove it.

This settles the full-rank case. Noisy links with lower rank,
however, do not always force a finite horizon. We now exhibit low-rank
entangled link families for which the swapping chains admit
postselected branches whose end-to-end states are entangled at
every finite depth.

\begin{example}[Low-rank persistence]
\label{ex:lowrank}
The amplitude-damping channel
$\mathcal A_s(X)=K_0XK_0^\dagger+K_1XK_1^\dagger$
describes a process in which a qubit in $\ket1$ survives with
probability $s$ and decays to $\ket0$ with probability $1-s$.
Its Kraus operators are
$K_0=\left(\begin{smallmatrix}1&0\\0&\sqrt{s}\end{smallmatrix}\right)$
and
$K_1=\left(\begin{smallmatrix}0&\sqrt{1-s}\\0&0\end{smallmatrix}\right)$.
The pure-dephasing channel
$\mathcal P_\kappa(X):=\frac{1+\kappa}{2}X+
\frac{1-\kappa}{2}\sigma_zX\sigma_z$
describes a process in which the qubit is unchanged with probability
$(1+\kappa)/2$ and undergoes a phase flip $\sigma_z$ with probability
$(1-\kappa)/2$. It preserves the populations and shrinks the
off-diagonal entries by a factor $\kappa\in(0,1]$. Their
composition
$\mathcal E_{s,\kappa}:=\mathcal P_\kappa\circ\mathcal A_s
=\mathcal A_s\circ\mathcal P_\kappa$
reduces to amplitude damping at $\kappa=1$ and to pure dephasing at
$s=1$.
Sending one qubit of $\PhiP$ through $\mathcal E_{s,\kappa}$, we
have, in the computational basis, the elementary link
\begin{equation}\label{eq:ad-link}
\rho_{s,\kappa}:=(\II\otimes\mathcal E_{s,\kappa})(\PhiP)
=\frac{1}{2}\begin{pmatrix}
1 & 0 & 0 & \kappa\sqrt{s}\\
0 & 0 & 0 & 0\\
0 & 0 & 1-s & 0\\
\kappa\sqrt{s} & 0 & 0 & s
\end{pmatrix}.
\end{equation}%
The corresponding concurrence is
$C(\rho_{s,\kappa})=\kappa\sqrt{s}>0$.
The rank of the link is three under the composition ($0<s<1$ and
$0<\kappa<1$), and we have a rank-two link under amplitude
damping alone ($\kappa=1$, $0<s<1$) or pure dephasing alone ($0<\kappa<1$, $s=1$), and
a rank-one link at $s=\kappa=1$, where
$\rho_{s,\kappa}=\PhiP$.
In a chain of $N+1$ copies of $\rho_{s,\kappa}$, we postselect
every intermediate Bell measurement on $\PhiP$ or $\Phi^-$. The
two outcomes together carry probability $1/2$ at each node, so the
accepted branches carry $(1/2)^N$ in total. Applying a local
Pauli-$Z$ on the boundary qubit for each $\Phi^-$ outcome, we obtain, for
every realized record, the postselected end-to-end state
$\rho_N=\rho_{s^{N+1},\kappa^{N+1}}$, again a member of the same family
(SM Sec.~\ref{sm:lowrank}). Therefore,
$C(\rho_N)=(\kappa\sqrt{s})^{N+1}>0$ for every finite $N$. Each accepted
branch of the swapping process is exponentially rare, but on it
the entanglement of the end-to-end state never vanishes. This is what Theorem~\ref{thm:rank4}
forbids for full-rank links.
\end{example}

\paragraph{Experimental Results.}
To test this fundamental rank-aware distinction, we performed swapping experiments on IBM's Heron R3 quantum processor, comparing state evolution under the rank-2 amplitude-damping (AD) noise versus rank-4 depolarizing (DP) noise. While both implementations incur non-negligible overhead due to hardware imperfections, we mitigated this by analyzing their concurrence ratio, $C(\text{DP})/C(\text{AD})$, assuming the device overhead is multiplicative, noise-independent, and cancels out in the ratio (SM Sec.~\ref{sm:repswaps}). As shown in Fig.~\ref{fig:rank-comparison}, starting from an initial concurrence ratio $C(\text{DP})/C(\text{AD})\approx 1$, the ratio monotonically decays to zero over the number of swaps, in agreement with the prediction that $C(\text{DP})$ vanishes strictly before $C(\text{AD})$.
This highlights the fundamental distinction between full-rank and rank-deficient two-qubit states. Eventually, accumulated experimental noise elevates the rank-2 AD noise to full rank and drives both $C(\text{DP})$ and $C(\text{AD})$ to zero (Fig.~\ref{fig:rank-comparison}, inset).

\begin{figure}[t]
\centering
\includegraphics[width=\columnwidth]{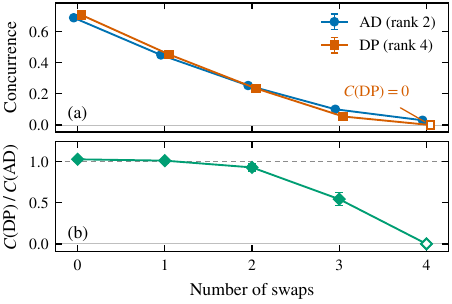}
\caption{Experimental verification of rank-aware distinction in entanglement swapping.
The states are prepared under one-sided, rank-2 amplitude-damping (AD) and rank-4 depolarizing (DP) channels respectively on IBM's Heron R3 processor (\texttt{ibm\_boston}), with an initial concurrence $C(\text{AD})\approx C(\text{DP})\approx 0.7$.
The ratio monotonically decays to zero (solid), signaling the sudden death of $C(\text{DP})$ as the theory predicts (dashed). Eventually both $C(\text{DP})$ and $C(\text{AD})$ reach zero due to experimental imperfection (inset).
}
\label{fig:rank-comparison}
\end{figure}

\paragraph*{Discussion.}
Theorem~\ref{thm:rank4} and Example~\ref{ex:lowrank} together
cover all four ranks: a full-rank link always has a finite
entanglement horizon, whereas at each lower rank, there exist links
that evade it. In its heterogeneous form, $\prod_i\eta_i\le 1/3$, the theorem
is an operational criterion: each link enters the criterion only through its noise
floor $\lambda_i$, one spectral number, and no
other knowledge of the noise model is required.
For qubit chains under SLOCC, Ref.~\cite{MasajadaPhilipStreltsov2026}
bounds, exponentially in the depth, the probability that a
postselected branch has concurrence at least a fixed positive value,
which leaves rare
entangled branches possible at every finite depth.
Corollary~\ref{cor:slocc-horizon} excludes such branches for
full-rank links beyond an explicit finite depth, while the low-rank
example shows that they exist.

\begin{samepage}
In a quantum network, when we consider a route $P$ between two end
nodes to perform a chain of postselected swaps, the criterion
becomes a threshold on an additive score along the route. Assign to each
full-rank link $i$ the normalized cost
\begin{equation}\label{eq:beta-link}
\beta_i:=\frac{1}{\ln3}
\ln\frac{2-\lambda_i}{2(1-\lambda_i)},
\qquad \lambda_i:=4\mu_{\min}(\rho_i),
\end{equation}%
\end{samepage}%
and define the score of the route as
$\beta(P):=\sum_{i\in P}\beta_i$. The horizon bound gives the
branch-independent rejection rule
\begin{equation}\label{eq:budget}
\boxed{\;
\beta(P)\ge1\ \Longrightarrow\
\substack{\text{every postselected}\\[-2pt]
\text{branch is separable}}\; .}
\end{equation}
A simple memory model makes the score concrete: for $N{+}1$
identical isotropic links with initial noise floor $\lambda_0$,
each qubit $q$ stored for a time $t_q$ in a memory with
depolarizing time $T_m$, the score is approximated, when $\lambda_0\ll1$
and $t_q/T_m\ll1$, by
$\beta(P)\approx[(N{+}1)\lambda_0+\sum_q t_q/T_m]/(2\ln3)$
(SM Sec.~\ref{sm:budget}). The first term
is the cost already present when the links are created, and the
second is accumulated during storage. A qubit pair suffers entanglement degradation while waiting for
the remaining links to be generated and for the swap record to
travel to the end nodes.

The horizon bound also sorts protocol architectures into research
directions. Entanglement
purification~\cite{Bennett1996purification,Deutsch1996privacy,briegel1998quantum,dur1999quantum}
cannot revive copies whose state is already
separable. Purification scheduled
after the swaps therefore recovers nothing beyond the horizon
bound. It must instead be applied before the
swaps, or at intermediate stations~\cite{Lapeyre2012fullrank}
spaced, at the least, so that no
purely swapped segment crosses $\beta=1$, and the optimal schedule
under realistic memories is an important problem. At the network layer,
hop-by-hop transmission forwards quantum information
between neighboring nodes without creating end-to-end entanglement,
thereby avoiding the horizon and relaxing the strict synchronization required
by end-to-end designs~\cite{ZengVBQT2026}. With entanglement
verification at each hop, the protocol improves fidelity and
throughput under noise. Determining the optimal strategy that adapts these purification,
hop-by-hop, and hybrid designs is a natural challenge, whose
solution could have a broad impact on quantum network design.

In the language of many-body physics, a chain of noisy entangled pairs shared
by neighboring sites forms a matrix-product density
operator~\cite{VerstraeteGarciaRipollCirac2004}, and the
shared pairs are called its bonds. In that literature,
one measures all the sites between the two ends and
asks how much entanglement the outcomes concentrate between the
ends, and calls that quantity the localizable
entanglement~\cite{VerstraetePoppCirac2004,PoppVerstraeteMartinDelgadoCirac2005}.
Adaptive postselected swapping is one such procedure. For pure
bonds, the localizable entanglement can stay nonzero at every
distance~\cite{WahlPerezGarciaCirac2012}. When the bonds are mixed,
we show that full-rank bonds force every branch to become separable
after a finite number of bonds, while some rank-deficient bonds admit
a postselection scheme that keeps the localizable entanglement
nonzero at every finite distance.
What separates the two behaviors is the rank of the bonds,
not the amount of entanglement they carry.

\medskip
\noindent
\begin{acknowledgments}
X.T. and N.Y. are supported by the National Science Foundation under Grant No.~CCF-2553759. N.P. and X.M. were supported primarily by the Co-design Center for Quantum Advantage (C2QA) through the U.S. Department of Energy under Award No.~DE-SCL0000120. X.M. was also supported in part by the Rensselaer-IBM Accelerate Quantum Computing Research.
\end{acknowledgments}

\bibliographystyle{apsrev4-2}
\bibliography{refs}

\clearpage

\appendix
\onecolumngrid
\begin{center}
{\large\textbf{Supplemental Material for}}\\[3pt]
{\large\textbf{``Full-Rank Noise Forbids Long-Range Entanglement
Swapping''}}\\[8pt]
Xuan Du Trinh,$^{1}$ Nicholas Pardave,$^{2}$ Angie Huang,$^{3}$ Xiangyi Meng,$^{2}$ and Nengkun Yu$^{1}$\\[4pt]
\emph{$^{1}$Stony Brook University, Stony Brook, New York 11794, USA}\\
\emph{$^{2}$Rensselaer Polytechnic Institute, Troy, New York 12180, USA}\\
\emph{$^{3}$Harvard University, Cambridge, Massachusetts 02138, USA}\\[6pt]
\end{center}
\setcounter{section}{0}
\setcounter{equation}{0}
\renewcommand{\thesection}{S\arabic{section}}
\renewcommand{\theequation}{S\arabic{equation}}
\renewcommand{\thetheorem}{S\arabic{theorem}}
\setcounter{theorem}{0}
\setcounter{secnumdepth}{3}
\makeatletter
\@removefromreset{equation}{section}
\makeatother
\renewcommand{\theHtheorem}{S\arabic{theorem}}

\noindent\emph{Roadmap.}
This Supplemental Material is organized in three parts. \emph{Part A}
(Secs.~\ref{sm:onestep}--\ref{sm:slocc-reduction}) proves the
isotropic-chain horizon theorem, from which the
body's full-rank Theorem~\ref{thm:rank4} follows by the
reduction below. The proof proceeds in three stages: a one-step
channel-composition identity (Sec.~\ref{sm:onestep}), the reduction
of any entangled full-rank link to a chain of universal isotropic
links joined by dressed rank-two filters
(Secs.~\ref{sm:slocc}--\ref{sm:caseA}), and a direct partial-transpose positivity argument that yields
the sharp threshold $\eta^{N+1}\le 1/3$ uniformly over the
postselected outcomes (Secs.~\ref{sm:lastR}--\ref{sm:cancellation}).
Sec.~\ref{sm:budget}
derives the normalized route score in main-text
Eq.~\eqref{eq:budget} and controls its low-noise approximation.
Sec.~\ref{sm:slocc-reduction} then proves the single-copy
SLOCC corollary and the purification statement.
\emph{Part B} (Sec.~\ref{sm:lowrank}) treats links of rank three
and links of rank two as one channel family and proves in a single
theorem that chains of both deliver an entangled state at
every finite depth under postselected $\Phi^\pm$ swaps.
\emph{Part C}
(Secs.~\ref{sm:repswaps}--\ref{sm:qubits}) describes the methods and devices used in the quantum processor experiments.

\subsection*{Part A: Proof of the isotropic-chain horizon theorem}

\section{One-step effective channel}\label{sm:onestep}

We begin by recording the channel form of one swap step, on which
the entire reduction relies.

\begin{proposition}[One-step effective channel]\label{prop:onestep}
Two links $\rho_{A_1B_1}=(\II\otimes\mathcal E_1)(\PhiP)$ and
$\rho_{A_2B_2}=(\II\otimes\mathcal E_2)(\PhiP)$, joined by the
postselected outcome $\ket{\Psi}=(\II\otimes M)\ket{\PhiP}$ on
$B_1A_2$, produce on $A_1B_2$ the unnormalized state
\begin{equation}\label{eq:onestep}
\widetilde\rho_{A_1B_2}\;\propto\;(\II\otimes\mathcal E_2\circ\mathcal M
\circ\mathcal E_1)(\PhiP),
\qquad \mathcal M(X)=M^*XM^T.
\end{equation}
\end{proposition}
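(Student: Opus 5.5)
The plan is a direct computation with Kraus decompositions, reducing everything to two elementary identities for the unnormalized-by-$1/\sqrt2$ Bell vector $\ket{\PhiP}=\tfrac{1}{\sqrt2}\sum_{i}\ket{ii}$:
\begin{itemize}
\item the \emph{transpose trick}, $(\II\otimes X)\ket{\PhiP}=(X^T\otimes\II)\ket{\PhiP}$;
\item the \emph{teleportation identity}, $\bra{\PhiP}_{B_1A_2}\bigl(\ket{\PhiP}_{A_1B_1}\otimes\ket{\PhiP}_{A_2B_2}\bigr)=\tfrac12\ket{\PhiP}_{A_1B_2}$, which is checked by expanding in the computational basis.
\end{itemize}
The postselected unnormalized state is $\widetilde\rho_{A_1B_2}=\bra{\Psi}_{B_1A_2}(\rho_{A_1B_1}\otimes\rho_{A_2B_2})\ket{\Psi}_{B_1A_2}$. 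The prefactor in Eq.~\eqref{eq:onestep} is allowed to be any positive constant, so I only need to track it loosely.

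First I fix Kraus decompositions $\mathcal E_1(X)=\sum_a E_aXE_a^\dagger$ and $\mathcal E_2(X)=\sum_bF_bXF_b^\dagger$. Then
$$\rho_{A_1B_1}\otimes\rho_{A_2B_2}=\sum_{a,b}\ket{v_{ab}}\!\bra{v_{ab}},\qquad \ket{v_{ab}}=(\II\otimes E_a)_{A_1B_1}\ket{\PhiP}\otimes(\II\otimes F_b)_{A_2B_2}\ket{\PhiP}.$$
By linearity it suffices to compute each vector $\bra{\Psi}_{B_1A_2}\ket{v_{ab}}$ and then sum the resulting outer products.

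Next I move the filter onto $B_1$. Taking the adjoint of the transpose trick gives $\bra{\Psi}=\bra{\PhiP}(\II\otimes M^\dagger)=\bra{\PhiP}(M^*\otimes\II)$ on $B_1A_2$. The operator $F_b$ acts on $B_2$, outside the measured pair, so it commutes past the bra. This leaves
$$\bra{\PhiP}_{B_1A_2}\,(M^*E_a)_{B_1}\,\ket{\PhiP}_{A_1B_1}\ket{\PhiP}_{A_2B_2}.$$
Writing $X=M^*E_a$, I apply the transpose trick to move $X_{B_1}$ onto $A_1$ as $X^T$, which is outside the bra. The teleportation identity then gives $\tfrac12(X^T\otimes\II)\ket{\PhiP}_{A_1B_2}$, and the transpose trick applied once more returns this to $\tfrac12(\II\otimes X)\ket{\PhiP}$. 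Restoring $F_b$, I get
$$\bra{\Psi}_{B_1A_2}\ket{v_{ab}}=\tfrac12(\II\otimes F_bM^*E_a)\ket{\PhiP}_{A_1B_2}.$$

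Finally I sum the outer products over $a,b$. Since $(M^*)^\dagger=M^T$, this gives
$$\widetilde\rho_{A_1B_2}=\tfrac14\sum_{a,b}(\II\otimes F_bM^*E_a)\,\PhiP\,(\II\otimes E_a^\dagger M^TF_b^\dagger)=\tfrac14(\II\otimes\mathcal E_2\circ\mathcal M\circ\mathcal E_1)(\PhiP),$$
which is Eq.~\eqref{eq:onestep}.

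The only real obstacle is conjugation bookkeeping: deciding whether $M$ appears as $M^*$ or $M^T$, and on which side of the composition it sits. I will guard against errors by checking the case $M=\II$, where standard swapping must give the Choi operator of $\mathcal E_2\circ\mathcal E_1$. Iterating the proposition along the chain then yields Eq.~\eqref{eq:chain}.
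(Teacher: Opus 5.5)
Your proposal is correct and follows essentially the same route as the paper. Both arguments Kraus-expand the two links, show that each term contracts to $\tfrac12(\II\otimes F_bM^*E_a)\ket{\PhiP}_{A_1B_2}$, and sum the outer products to obtain $\tfrac14(\II\otimes\mathcal E_2\circ\mathcal M\circ\mathcal E_1)(\PhiP)$. The differences are cosmetic: you verify the per-term identity with the transpose trick and the teleportation identity, where the paper contracts indices explicitly, and you take $\bra{\Psi}_{B_1A_2}(\rho_{A_1B_1}\otimes\rho_{A_2B_2})\ket{\Psi}_{B_1A_2}$ as the definition of the postselected state, which the paper derives from $\Tr_{B_1A_2}[(\II\otimes\Psi\otimes\II)(\rho_{A_1B_1}\otimes\rho_{A_2B_2})]$.
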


\begin{proof}
The input state is $\rho_{A_1B_1}\otimes\rho_{A_2B_2}$, and the
retained outcome is the rank-one projector $\Psi:=\ket\Psi\!\bra\Psi$ on
$B_1A_2$. We apply the measurement operator $\II\otimes\Psi\otimes\II$ on
$A_1B_1A_2B_2$ and trace out the measured qubits, which leaves on $A_1B_2$
the unnormalized state
\begin{equation}\label{eq:onestep-trace}
\widetilde\rho_{A_1B_2}
=\Tr_{B_1A_2}\bigl[(\II\otimes\Psi\otimes\II)
(\rho_{A_1B_1}\otimes\rho_{A_2B_2})\bigr].
\end{equation}
Here, $\Psi^2=\Psi$ and the cyclicity of the partial trace over the
qubits on which $\Psi$ acts absorb the projector on the right of the
projected state
$(\II\otimes\Psi\otimes\II)(\rho_{A_1B_1}\otimes\rho_{A_2B_2})
(\II\otimes\Psi\otimes\II)$. We simplify
Eq.~\eqref{eq:onestep-trace} with three identities. First, since $\Psi$
is rank one, the partial trace is a contraction,
\begin{equation}\label{eq:onestep-contract}
\widetilde\rho_{A_1B_2}
=(\II\otimes\bra\Psi\otimes\II)
(\rho_{A_1B_1}\otimes\rho_{A_2B_2})
(\II\otimes\ket\Psi\otimes\II),
\end{equation}
as one sees by evaluating the partial trace in a basis $\{\ket n\}$ of
$B_1A_2$ and using $\sum_n\ket n\!\braket{n|\Psi}=\ket\Psi$. Second,
Kraus decompositions $\mathcal E_1(X)=\sum_jK^{(1)}_jXK^{(1)\dagger}_j$
and $\mathcal E_2(X)=\sum_\ell K^{(2)}_\ell XK^{(2)\dagger}_\ell$ write
the two links as the sums
$\rho_{A_1B_1}=\sum_j(\II\otimes K^{(1)}_j)\PhiP(\II\otimes K^{(1)}_j)^\dagger$
and
$\rho_{A_2B_2}=\sum_\ell(\II\otimes K^{(2)}_\ell)\PhiP(\II\otimes K^{(2)}_\ell)^\dagger$.
Third, for every $j$ and $\ell$,
\begin{equation}\label{eq:onestep-vector}
(\II\otimes\bra\Psi\otimes\II)
\bigl[(\II\otimes K^{(1)}_j)\ket{\PhiP}_{A_1B_1}
\otimes(\II\otimes K^{(2)}_\ell)\ket{\PhiP}_{A_2B_2}\bigr]
=\tfrac12(\II\otimes K^{(2)}_\ell M^*K^{(1)}_j)\ket{\PhiP}_{A_1B_2}.
\end{equation}
To prove Eq.~\eqref{eq:onestep-vector}, expand
$\ket{\PhiP}=\tfrac1{\sqrt2}\sum_i\ket{ii}$ in all three places:
$(\II\otimes K^{(1)}_j)\ket{\PhiP}_{A_1B_1}
=\tfrac1{\sqrt2}\sum_i\ket i_{A_1}\otimes K^{(1)}_j\ket i_{B_1}$,
$(\II\otimes K^{(2)}_\ell)\ket{\PhiP}_{A_2B_2}
=\tfrac1{\sqrt2}\sum_k\ket k_{A_2}\otimes K^{(2)}_\ell\ket k_{B_2}$, and
$\bra\Psi=\bra{\PhiP}(\II\otimes M^\dagger)
=\tfrac1{\sqrt2}\sum_m\bra m_{B_1}\otimes\bra m_{A_2}M^\dagger$.
Contracting $B_1$ and $A_2$ leaves
$\tfrac1{2\sqrt2}\sum_{i,k}
\bigl(\sum_m\bra mK^{(1)}_j\ket i\,\bra mM^\dagger\ket k\bigr)
\ket i_{A_1}\otimes K^{(2)}_\ell\ket k_{B_2}$. The inner sum is
$(K^{(1)\mathsf T}_jM^\dagger)_{ik}$, and
$\sum_k(K^{(1)\mathsf T}_jM^\dagger)_{ik}\ket k
=(K^{(1)\mathsf T}_jM^\dagger)^{\mathsf T}\ket i=M^*K^{(1)}_j\ket i$, so
the contraction equals
$\tfrac1{2\sqrt2}\sum_i\ket i_{A_1}\otimes K^{(2)}_\ell M^*K^{(1)}_j\ket i_{B_2}$,
the right side of Eq.~\eqref{eq:onestep-vector}. Inserting the Kraus
sums into Eq.~\eqref{eq:onestep-contract} and applying
Eq.~\eqref{eq:onestep-vector} to each term gives
\begin{equation}\label{eq:onestep-result}
\widetilde\rho_{A_1B_2}
=\tfrac14\sum_{j,\ell}(\II\otimes K^{(2)}_\ell M^*K^{(1)}_j)\PhiP
(\II\otimes K^{(2)}_\ell M^*K^{(1)}_j)^\dagger
=\tfrac14(\II\otimes\mathcal E_2\circ\mathcal M\circ\mathcal E_1)(\PhiP),
\end{equation}
which is Eq.~\eqref{eq:onestep}.
\end{proof}

\medskip
\noindent\textbf{Rank-one filters are entanglement breaking.}
If the swap matrix $M$ has rank one, write $M=\ket u\!\bra v$.
Then $(\II\otimes M)\rho(\II\otimes M^\dagger)=\sigma_A\otimes\ket u\!\bra u$
is a product positive operator for every $\rho$, so the boundary
state is separable already after one step. Throughout the main text, we restrict to rank-two $M_i$, the only rank that can sustain
entanglement through the swaps. This rank is that of the filter
$M_i$ itself, and is unrelated to the rank of the link state it acts
on.

\section{Product-Kraus representation}\label{sm:slocc}

We now realize each component $\tau_\psi$ as the image of the
universal isotropic state $\omega_\eta$ under a completely
positive map whose Kraus operators are product operators on
$A\otimes B$. Recall the
Wootters decomposition of $\rho_0$ used in the main text,
\begin{equation}\label{eq:rho0-decomp}
\rho_0=\sum_i p_i\,\tau_{\psi_i},
\qquad
\tau_\psi:=(1-\lambda)\ket\psi\!\bra\psi+\tfrac{\lambda}{4}I_4,
\end{equation}
with uniform noise weight $\lambda/4=\mu_{\min}(\rho_0)$ across the
pieces (Wootters' equal-concurrence ensemble~\cite{wootters1998entanglement}).
The universal isotropic reference state is
$\omega_\eta=\eta\,\PhiP+(1-\eta)I_4/4=(\II\otimes\Dn)(\PhiP)$ with
the depolarizing parameter of Eq.~\eqref{eq:eta-of-lambda}.

\begin{lemma}[Product-Kraus representation]\label{lem:slocc}
Every two-qubit pure state $\ket\psi$ determines a unique
$2\times2$ matrix $M_\psi$ with $\Tr(M_\psi^\dagger M_\psi)=1$ via
$(M_\psi\otimes I)\ket{\PhiP}=\ket\psi/\sqrt2$ (distinct from the
swap-filter normalization $\Tr(M_i^\dagger M_i)=2$ fixed in the
body). For each such $M_\psi$, there are vectors
$\xi_0,\xi_1\in\mathbb C^2$ with
$\xi_0\xi_0^\dagger+\xi_1\xi_1^\dagger=(2-\lambda)(I-M_\psi M_\psi^\dagger)$.
We write them as column vectors rather than as kets because they are
not normalized: taking the trace gives
$\xi_0^\dagger\xi_0+\xi_1^\dagger\xi_1=2-\lambda$.
The completely positive map $\mathcal E_\psi$ with the Kraus
operators
\begin{equation}\label{eq:kraus-types}
K_0=\sqrt{2-\lambda}\,M_\psi\otimes I,\qquad
L_{ab}=(\xi_a\bra0)\otimes(\ket b\!\bra1),\quad a,b\in\{0,1\},
\end{equation}
then satisfies $\mathcal E_\psi(\omega_\eta)=\tau_\psi$
when $\eta$ is given by Eq.~\eqref{eq:eta-of-lambda}. Each $L_{ab}$
is a tensor product of two single-qubit operators, and
$L_{ab}\ket{\PhiP}=0$.
\end{lemma}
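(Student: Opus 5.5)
The plan is a direct Kraus-by-Kraus computation, done in three steps: construct $M_\psi$, construct the vectors $\xi_a$, and then evaluate $\mathcal E_\psi(\omega_\eta)$ using the explicit form $\omega_\eta=\eta\,\PhiP+(1-\eta)I_4/4$. No earlier result beyond Eq.~\eqref{eq:eta-of-lambda} is needed.

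\emph{Existence and uniqueness of $M_\psi$.} Write $\ket\psi=\sum_{jk}c_{jk}\ket{jk}$ and set $M_\psi:=\sum_{jk}c_{jk}\ket j\!\bra k$. Then $(M_\psi\otimes I)\ket{\PhiP}=\tfrac1{\sqrt2}\sum_k M_\psi\ket k\otimes\ket k=\ket\psi/\sqrt2$. Uniqueness holds because the map $M\mapsto (M\otimes I)\ket{\PhiP}$ is a linear bijection from $2\times 2$ matrices onto $\mathbb C^4$. Comparing norms gives $\Tr(M_\psi^\dagger M_\psi)/2=\|\psi\|^2/2=1/2$, which is the stated normalization.

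\emph{Existence of $\xi_0,\xi_1$.} The eigenvalues of $M_\psi M_\psi^\dagger$ are nonnegative and sum to $\Tr(M_\psi^\dagger M_\psi)=1$, so each is at most $1$. Hence $I-M_\psi M_\psi^\dagger\succeq0$, and $(2-\lambda)(I-M_\psi M_\psi^\dagger)$ is a positive $2\times2$ matrix. Take $\xi_a$ to be the two columns of its positive square root, or equivalently $\sqrt{\text{eigenvalue}}$ times the eigenvectors. The trace identity $\xi_0^\dagger\xi_0+\xi_1^\dagger\xi_1=(2-\lambda)(2-1)=2-\lambda$ then follows immediately. Each $L_{ab}$ is manifestly a product operator. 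Moreover, $L_{ab}\ket{\PhiP}=\tfrac1{\sqrt2}(\xi_a\braket{0|0}\otimes\ket b\braket{1|0}+\xi_a\braket{0|1}\otimes\ket b\braket{1|1})=0$, because every term contains a vanishing inner product.

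\emph{Evaluation of $\mathcal E_\psi(\omega_\eta)$.} I will split $\omega_\eta$ into its two pieces.
\begin{itemize}
\item The $K_0$ term gives $(2-\lambda)\bigl[\eta\,\tfrac12\ket\psi\!\bra\psi+\tfrac{1-\eta}{4}M_\psi M_\psi^\dagger\otimes I\bigr]$. Substituting $\eta=2(1-\lambda)/(2-\lambda)$ yields $(2-\lambda)\eta/2=1-\lambda$ and $(2-\lambda)(1-\eta)=\lambda$. So this term equals $(1-\lambda)\ket\psi\!\bra\psi+\tfrac\lambda4 M_\psi M_\psi^\dagger\otimes I$.
\item For the $L_{ab}$ terms, the $\PhiP$ part drops out because $L_{ab}\ket{\PhiP}=0$. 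What remains is $\tfrac{1-\eta}{4}\sum_{ab}L_{ab}L_{ab}^\dagger=\tfrac{1-\eta}{4}\bigl(\sum_a\xi_a\xi_a^\dagger\bigr)\otimes\bigl(\sum_b\ket b\!\bra b\bigr)$. This equals $\tfrac{(1-\eta)(2-\lambda)}{4}(I-M_\psi M_\psi^\dagger)\otimes I=\tfrac\lambda4(I-M_\psi M_\psi^\dagger)\otimes I$.
\end{itemize}
Adding the two contributions cancels $M_\psi M_\psi^\dagger\otimes I$ and leaves $(1-\lambda)\ket\psi\!\bra\psi+\tfrac\lambda4 I_4=\tau_\psi$.

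The only step requiring genuine care is the algebra tying $\eta$ to $\lambda$. The choice in Eq.~\eqref{eq:eta-of-lambda} is exactly what makes the two identities $(2-\lambda)\eta/2=1-\lambda$ and $(2-\lambda)(1-\eta)=\lambda$ hold simultaneously. That in turn lets the depolarized part of the $K_0$ term and the $L_{ab}$ contributions combine into the full identity. Everything else is bookkeeping.
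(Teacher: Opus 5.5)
Your proof is correct and matches the paper's argument step for step. You use the same coefficient construction of $M_\psi$, the same positivity of $(2-\lambda)(I-M_\psi M_\psi^\dagger)$ from $\Tr(M_\psi^\dagger M_\psi)=1$, and the same two identities $(2-\lambda)\eta=2(1-\lambda)$ and $(2-\lambda)(1-\eta)=\lambda$; the only difference is that you group the computation by Kraus type ($K_0$ versus $L_{ab}$) rather than evaluating $\mathcal E_\psi(\PhiP)$ and $\mathcal E_\psi(I_4)$ separately, which is the same calculation.
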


\begin{proof}
In the computational basis, $(M_\psi\otimes I)\ket{\PhiP}=\ket\psi/\sqrt2$
is equivalent to $(M_\psi)_{ij}=\langle ij|\psi\rangle$, so $M_\psi$
exists, is unique, and has
$\Tr(M_\psi^\dagger M_\psi)=\braket{\psi|\psi}=1$. The operator
$K_0=\sqrt{2-\lambda}\,M_\psi\otimes I$ satisfies
$K_0\PhiP K_0^\dagger=\tfrac{2-\lambda}{2}\ket\psi\!\bra\psi$ and
$K_0K_0^\dagger=(2-\lambda)\,(M_\psi M_\psi^\dagger)\otimes I$. Since
$M_\psi M_\psi^\dagger\succeq0$ has trace one,
$(2-\lambda)(I-M_\psi M_\psi^\dagger)\succeq0$. We take $\xi_0$ and
$\xi_1$ to be its two orthonormal eigenvectors, and we scale each so
that $\xi_a^\dagger\xi_a$ equals its eigenvalue, which gives
$\xi_0\xi_0^\dagger+\xi_1\xi_1^\dagger=(2-\lambda)(I-M_\psi M_\psi^\dagger)$.
The operators
$L_{ab}$ of Eq.~\eqref{eq:kraus-types} are tensor products of the
single-qubit operators $\xi_a\bra0$ and $\ket b\!\bra1$. Each
annihilates $\ket{\PhiP}$, because $\bra0\otimes\bra1$ is orthogonal
to it, and
$\sum_{ab}L_{ab}L_{ab}^\dagger
=\sum_a\xi_a\xi_a^\dagger\otimes\sum_b\ket b\!\bra b
=(2-\lambda)(I-M_\psi M_\psi^\dagger)\otimes I$. Hence, the map
$\mathcal E_\psi(\rho)=K_0\rho K_0^\dagger+\sum_{ab}L_{ab}\rho L_{ab}^\dagger$
satisfies $\mathcal E_\psi(\PhiP)=\tfrac{2-\lambda}{2}\ket\psi\!\bra\psi$ and
$\mathcal E_\psi(I_4)=K_0K_0^\dagger+(2-\lambda)(I-M_\psi M_\psi^\dagger)\otimes I
=(2-\lambda)\,I_4$, so
$$
\mathcal E_\psi(\omega_\eta)
=\tfrac{(2-\lambda)\eta}{2}\ket\psi\!\bra\psi
+\tfrac{(2-\lambda)(1-\eta)}{4}\,I_4
=(1-\lambda)\ket\psi\!\bra\psi+\tfrac\lambda4\,I_4=\tau_\psi,
$$
where the second equality uses $(2-\lambda)\eta=2(1-\lambda)$ and
$(2-\lambda)(1-\eta)=\lambda$, both from Eq.~\eqref{eq:eta-of-lambda}.
No other value of $\eta$ serves: the normalized image
$\mathcal E_\psi(\omega_\eta)/\Tr\mathcal E_\psi(\omega_\eta)$ has pure
weight $\eta/(2-\eta)$, which equals $1-\lambda$ only at this value.
Only complete positivity, the product form of the $L_{ab}$, and
$L_{ab}\ket{\PhiP}=0$ enter the chain expansion below.
\end{proof}

\section{Filter-moving identity}\label{sm:filtermove}

Lemma~\ref{lem:slocc} puts the one-sided operator
$K_0=\sqrt{2-\lambda}\,M_\psi\otimes I$ on the qubit $A_k$ of every link $k$.
At a node $k\ge1$, the swap measures $A_k$ together with $B_{k-1}$, and
the next lemma shows that the swap filter $M_k$ can absorb any operator
on $A_k$ without changing the postselected output. Recall from the
main text that node $k$ retains the outcome
$\ket{\Psi_k}=(\II\otimes M_k)\ket{\PhiP}$ on the pair $(B_{k-1},A_k)$,
where $M_k$ acts on $A_k$.

\begin{lemma}[Filter-moving identity]\label{lem:filtermove}
Let $k\ge1$, let $X$ be a single-qubit operator, and let $X_{A_k}$
act as $X$ on $A_k$ and as the identity on every other qubit. For every
operator $\rho$ on the chain qubits,
\begin{equation}\label{eq:filtermove}
\Tr_{B_{k-1}A_k}\!\bigl[\ket{\Psi_k}\!\bra{\Psi_k}\,X_{A_k}\rho X_{A_k}^\dagger\bigr]
=\Tr_{B_{k-1}A_k}\!\bigl[\ket{\Psi_k'}\!\bra{\Psi_k'}\,\rho\bigr],
\qquad
\ket{\Psi_k'}:=(\II\otimes M_k')\ket{\PhiP},
\quad
M_k':=X^\dagger M_k.
\end{equation}
Applying $X$ to $A_k$ before the swap at node $k$ therefore gives
the same unnormalized output as the swap with the dressed filter
$M_k'$ and no $X$.
\end{lemma}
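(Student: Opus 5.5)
The plan is to prove Eq.~\eqref{eq:filtermove} directly, by moving $X_{A_k}$ through the partial trace and absorbing it into the postselected vector. No chain structure is needed: the identity only uses the facts that $A_k$ is one of the traced qubits and that $M_k$ acts on $A_k$.

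\emph{Step 1 (cyclicity on the traced factor).} Let $S=B_{k-1}A_k$ and let $R$ be the remaining chain qubits. For any operator $Y$ on $S$, and with $Y_S=Y\otimes\II_R$, the partial trace satisfies
$$
\Tr_S[Y_S Z]=\Tr_S[Z Y_S]\quad\text{for every operator } Z \text{ on } SR.
$$
I will verify this on product operators $Z=Z_S\otimes Z_R$, where it reduces to ordinary cyclicity of $\Tr$ on $S$, and then extend by linearity. Since $X_{A_k}$ acts as $\II_{B_{k-1}}\otimes X\otimes\II_R$, I apply this with $Y_S=X^\dagger_{A_k}\ket{\Psi_k}\!\bra{\Psi_k}$. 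Taking the trailing factor $X^\dagger_{A_k}$ around the trace gives
$$
\Tr_S\!\bigl[\ket{\Psi_k}\!\bra{\Psi_k}X_{A_k}\rho X_{A_k}^\dagger\bigr]
=\Tr_S\!\bigl[X_{A_k}^\dagger\ket{\Psi_k}\!\bra{\Psi_k}X_{A_k}\,\rho\bigr].
$$

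\emph{Step 2 (absorb $X^\dagger$ into the filter).} By definition $\ket{\Psi_k}=(\II\otimes M_k)\ket{\PhiP}$ on $(B_{k-1},A_k)$, with $M_k$ acting on $A_k$. Therefore
$$
X^\dagger_{A_k}\ket{\Psi_k}=(\II\otimes X^\dagger M_k)\ket{\PhiP}=\ket{\Psi_k'},
$$
and taking adjoints gives $\bra{\Psi_k}X_{A_k}=\bra{\Psi_k'}$. The conjugated projector is thus $\ket{\Psi_k'}\!\bra{\Psi_k'}$, which yields Eq.~\eqref{eq:filtermove}.

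Finally, I will remark that the identity holds at the level of unnormalized operators. The dressed filter $M_k'$ need not satisfy $\Tr(M_k'^\dagger M_k')=2$, and any rescaling only changes the overall proportionality constant in Eq.~\eqref{eq:chain}, so it does not affect separability. There is no real obstacle here. The only point needing care is that cyclicity is applied only to operators supported on the traced factor $S$, which is exactly why the hypothesis $k\ge1$ matters: for $k\ge1$, qubit $A_k$ is measured and traced out at node $k$, whereas $A_0$ is a boundary qubit that is never traced.
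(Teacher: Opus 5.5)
Your proposal is correct and follows the paper's proof essentially verbatim. Cyclicity of the partial trace over $B_{k-1}A_k$, for operators supported on the traced qubits, moves $X_{A_k}^\dagger$ to the front, and the identity $X_{A_k}^\dagger(\II\otimes M_k)\ket{\PhiP}=(\II\otimes X^\dagger M_k)\ket{\PhiP}$ absorbs it into the dressed filter. One cosmetic slip: the single cyclic move you describe uses $Y_S=X_{A_k}^\dagger$ with $Z=\ket{\Psi_k}\!\bra{\Psi_k}X_{A_k}\rho$, not $Y_S=X_{A_k}^\dagger\ket{\Psi_k}\!\bra{\Psi_k}$, which would need a second cyclic step.
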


\begin{proof}
The operator $X_{A_k}$ acts only on qubits that the partial trace
removes, and the partial trace over $B_{k-1}A_k$ is cyclic for such
operators, so
$$
\Tr_{B_{k-1}A_k}\!\bigl[\ket{\Psi_k}\!\bra{\Psi_k}\,X_{A_k}\rho X_{A_k}^\dagger\bigr]
=\Tr_{B_{k-1}A_k}\!\bigl[X_{A_k}^\dagger\ket{\Psi_k}\!\bra{\Psi_k}X_{A_k}\,\rho\bigr].
$$
Since $X^\dagger$ and $M_k$ both act on $A_k$,
$X_{A_k}^\dagger\ket{\Psi_k}=(\II\otimes X^\dagger M_k)\ket{\PhiP}=\ket{\Psi_k'}$,
and $\bra{\Psi_k}X_{A_k}$ is its adjoint $\bra{\Psi_k'}$, which gives
Eq.~\eqref{eq:filtermove}.
\end{proof}

Two remarks complete the picture. First, the dressed filter
$M_k'=X^\dagger M_k$ need not satisfy the normalization
$\Tr[(M_k')^\dagger M_k']=2$ that the main text imposes on swap
filters. Rescaling $M_k'$ by a positive number multiplies both sides of
Eq.~\eqref{eq:filtermove} by its square and leaves the normalized
postselected state unchanged, so we restore the normalization whenever
we need it, and separability is unaffected. Second, the lemma needs a
swap at node $k$, so it does not apply to link $0$. At link $0$, the
operator $K_0$ acts on the boundary qubit $A_0$, which no node measures.
Every swap projector and every partial trace acts on other qubits and
commutes with $K_0$, so $K_0$ survives as a one-sided operation on
$A_0$ applied after the whole chain, which is the outer dressing in
Eq.~\eqref{eq:reduction} below. A local operation on one boundary qubit
maps separable states to separable states, so it does not affect the
separability argument.

\section{Assembling the reduction and separability of the product-Kraus terms}\label{sm:caseA}

We now combine the product-Kraus representation (Lemma~\ref{lem:slocc}) and
the filter-moving identity (Lemma~\ref{lem:filtermove}) into an explicit
decomposition of the full-rank chain into a piece built from locally dressed isotropic chains (the only piece that can carry end-to-end entanglement) and a piece that is separable by construction.

Multilinearity of the chain map in its link inputs lets us expand $\rho_0^{\otimes(N+1)}$ via the Wootters decomposition in Eq.~\eqref{eq:rho0-decomp} and expand each $\tau_{\psi_{i_k}}$ through the Kraus decomposition in Eq.~\eqref{eq:kraus-types}. Each resulting term selects,
at every link $k$, either the one-sided Kraus operator $K_0$ or one of the product Kraus operators $L_{ab}$. Grouping the terms by whether \emph{any} link selects an $L_{ab}$, we write the chain state as
\begin{align}\label{eq:reduction}
\widetilde\rho_N
&=\;\underbrace{\widetilde\rho_N^{(0)}}_{\text{all-}K_0\text{ terms}}
\;+\;
\underbrace{\widetilde\rho_N^{(1)}}_{\substack{\text{terms with at}\\\text{least one }L_{ab}}},\\
\widetilde\rho_N^{(0)}
&=(2-\lambda)^{N+1}\sum_{\vec i} p_{\vec i}\,
(M_{\psi_{i_0}}\!\otimes\!I)\,
\widetilde\rho_N^{\mathrm{iso}}\!\bigl(\{M_k'(\vec i)\}\bigr)\,
(M_{\psi_{i_0}}^\dagger\!\otimes\!I),
\notag
\end{align}
where $\vec i=(i_0,\dots,i_N)$ indexes the Wootters components on
the $N+1$ links, $p_{\vec i}=\prod_k p_{i_k}$ is the product of
the probabilities of the selected equal-concurrence pieces in
Eq.~\eqref{eq:rho0-decomp}, and
$M_k'(\vec i):=M_{\psi_{i_k}}^\dagger M_k$ for $k\ge 1$. Here
$\widetilde\rho_N^{\mathrm{iso}}(\{M_k'\})$ denotes the end-to-end
state of an isotropic chain of $N+1$ links $\omega_\eta$ joined by
the intermediate filters $\{M_k'\}$.
The second line collects the all-$K_0$ terms. The $N+1$ factors
$K_0\omega_\eta K_0^\dagger$ contribute $(2-\lambda)^{N+1}$ and leave
$\omega_\eta$ on every link with $M_{\psi_{i_k}}$ on $A_k$,
Lemma~\ref{lem:filtermove} absorbs $M_{\psi_{i_k}}$ into the swap at
node $k\ge1$, and $M_{\psi_{i_0}}$ on the boundary qubit $A_0$ commutes
with every swap and survives as the outer dressing.
For one swap, $N=1$, the all-$K_0$ term with indices $(i_0,i_1)$ reads
\begin{align}
&(2-\lambda)^2\,\Tr_{B_0A_1}\!\bigl[\ket{\Psi_1}\!\bra{\Psi_1}\,
(M_{\psi_{i_0}}\!\otimes I\otimes M_{\psi_{i_1}}\!\otimes I)\,
(\omega_\eta\otimes\omega_\eta)\,
(M_{\psi_{i_0}}\!\otimes I\otimes M_{\psi_{i_1}}\!\otimes I)^\dagger\bigr]\notag\\
&\qquad=(2-\lambda)^2\,(M_{\psi_{i_0}}\!\otimes I)\,
\Tr_{B_0A_1}\!\bigl[\ket{\Psi_1'}\!\bra{\Psi_1'}\,
(\omega_\eta\otimes\omega_\eta)\bigr]\,
(M_{\psi_{i_0}}\!\otimes I)^\dagger,
\qquad
\ket{\Psi_1'}=(\II\otimes M_{\psi_{i_1}}^\dagger M_1)\ket{\PhiP}.
\label{eq:reduction-example}
\end{align}
Lemma~\ref{lem:filtermove} absorbs the factor $M_{\psi_{i_1}}$ on the
measured qubit $A_1$ into the swap, and the factor $M_{\psi_{i_0}}$ on
the unmeasured qubit $A_0$ commutes with the projector and the trace
and moves outside. The trace on the right is
$\widetilde\rho_1^{\mathrm{iso}}(M_1')$, and summing over $(i_0,i_1)$
with the weight $p_{i_0}p_{i_1}$ gives the second line of
Eq.~\eqref{eq:reduction} at $N=1$.
Each $M_k'=M_{\psi_{i_k}}^\dagger M_k$ is invertible. The factor $M_{\psi_{i_k}}$ is invertible because every Wootters piece is entangled and the concurrence of a two-qubit pure state satisfies $C(\psi)=2|\det M_\psi|$, so $\det M_{\psi_{i_k}}\neq0$. The factor $M_k$ is invertible by the rank-two swap assumption. When we invoke the isotropic theorem below, we rescale each $M_k'$ to satisfy $\Tr[(M_k')^\dagger M_k']=2$, which changes only the overall
postselection probability and not separability.

We handle the two groups differently.
Every term of the all-$K_0$ sector $\widetilde\rho_N^{(0)}$ is an
isotropic-chain output dressed by a local operator on $A_0$ and
weighted by $(2-\lambda)^{N+1}p_{\vec i}\ge0$, so the sector is
separable as soon as every isotropic chain
$\widetilde\rho_N^{\mathrm{iso}}(\{M_k'(\vec i)\})$ is. This sector is the
only one that can carry end-to-end entanglement, and the one
the mechanism of Secs.~\ref{sm:lastR}--\ref{sm:cancellation}
analyzes.
The complementary sector $\widetilde\rho_N^{(1)}$ collects every term with at least one product Kraus operator $L_{ab}$ and is
separable on the boundary $A_0\otimes B_N$, as proved in the
remainder of this section.

\begin{quote}\itshape
\textbf{Claim (separability of the product-Kraus terms).}
In the Kraus expansion based on Eq.~\eqref{eq:kraus-types},
whenever any link $k\in\{0,\ldots,N\}$ selects a product Kraus
operator $L_{ab}$ of Eq.~\eqref{eq:kraus-types},
the resulting term in the chain-state sum is separable on the
boundary qubits $A_0\otimes B_N$.
\end{quote}

\begin{proof}
Each term of the expansion applies the chain map, the swap projections
followed by the trace over the measured qubits, to a tensor product of
$N+1$ link operators, each of which is $K_0\omega_\eta K_0^\dagger$ or
one of the $L_{ab}\omega_\eta L_{ab}^\dagger$. Fix a term in which link
$k$ selects $L_{ab}$.
By Eq.~\eqref{eq:kraus-types}, $L_{ab}=(\xi_a\bra0)\otimes(\ket b\!\bra1)$
is a tensor product of single-qubit operators, and $L_{ab}\ket{\PhiP}=0$.
The operator $L_{ab}$ therefore annihilates the $\PhiP$ component of
$\omega_\eta$, so the link operator of link $k$ is
$$
L_{ab}\omega_\eta L_{ab}^\dagger=\tfrac{1-\eta}{4}\,L_{ab}L_{ab}^\dagger
=P_k\otimes Q_k,
\qquad
P_k:=\tfrac{1-\eta}{4}\,\xi_a\xi_a^\dagger\ \text{on }A_k,
\quad
Q_k:=\ket b\!\bra b\ \text{on }B_k .
$$
The tensor product of the link operators is therefore a product
across the cut between $A_k$ and $B_k$. Every swap acts on a pair
$(B_{j-1},A_j)$, and every such pair lies on one side of this cut, as
does every measured qubit. The boundary operator is therefore a tensor
product of an operator on $A_0$ and an operator on $B_N$, which is
separable.
\end{proof}

\section{Last-replacement grouping}\label{sm:lastR}

The remaining all-$K_0$ sector is, by construction, a chain of
$N{+}1$ isotropic links $\omega_\eta$ joined by dressed rank-two
filters. We expand this chain by the link of the last
replacement. Throughout
Secs.~\ref{sm:lastR}--\ref{sm:cancellation}, $\mathcal N$ denotes
the composite map of this isotropic chain, written
$\mathcal N_{\mathrm{iso}}$ in the main text. Define the replacement map $\Rep(X)=\Tr(X)\,I_2/2$, so
$\Dn=\eta\,\mathrm{id}+(1-\eta)\Rep$. The grouping below is the precise form of the
main-text intuition: each depolarizer either multiplies the intact
trajectory by $\eta$ or, with weight $1-\eta$, replaces the qubit and
starts a separable-noise trajectory.
In $\mathcal N=\Dn\circ\Mn_N\circ\Dn\circ\cdots\circ\Mn_1\circ\Dn$,
the depolarizer of link $0$ is applied first, and the depolarizer of
link $k\ge1$ is the one that follows $\Mn_k$.

Set $\mathcal F:=\Mn_N\circ\cdots\circ\Mn_1$ (filters only),
$\mathcal L_k:=\Mn_k\circ\Dn\circ\cdots\circ\Mn_1\circ\Dn$ (the
part of $\mathcal N$ applied before the depolarizer of link $k$,
with $\mathcal L_0:=\mathrm{id}$), and
$\mathcal Q_k:=\Mn_N\circ\cdots\circ\Mn_{k+1}$ (the filters applied
after it, with $\mathcal Q_N:=\mathrm{id}$).

\begin{lemma}[Last-replacement expansion]\label{lem:lastR}
The composite map $\mathcal N$ of the isotropic chain satisfies
\begin{equation}\label{eq:lastR-sm}
\mathcal N = \eta^{N+1}\,\mathcal F
+\sum_{k=0}^{N}\eta^{N-k}(1-\eta)\,\mathcal Q_k\circ\Rep\circ\mathcal L_k.
\end{equation}
\end{lemma}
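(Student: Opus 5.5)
The plan is to expand each depolarizer $\Dn=\eta\,\mathrm{id}+(1-\eta)\Rep$ and group the $2^{N+1}$ resulting terms by the position of the last depolarizer that selects $\Rep$. The whole argument rests on composition of linear maps being multilinear. I will organize it as a backward induction, peeling off depolarizers from the end of the chain.

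The first step is to write $\mathcal N=\Dn\circ\mathcal L_N$. Here $\mathcal L_N=\Mn_N\circ\Dn\circ\cdots\circ\Mn_1\circ\Dn$ is exactly everything applied before the depolarizer of link $N$, matching the definition of $\mathcal L_k$ at $k=N$. Substituting $\Dn=\eta\,\mathrm{id}+(1-\eta)\Rep$ gives
\[
\mathcal N=(1-\eta)\,\Rep\circ\mathcal L_N+\eta\,\mathcal L_N .
\]
The first term is the $k=N$ summand of Eq.~\eqref{eq:lastR-sm}, since $\mathcal Q_N=\mathrm{id}$ and $\eta^{N-N}=1$.

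In the second term I write $\mathcal L_N=\Mn_N\circ\Dn\circ\mathcal L_{N-1}$ and expand the depolarizer of link $N-1$ the same way. The replacement branch gives
\[
\eta(1-\eta)\,\Mn_N\circ\Rep\circ\mathcal L_{N-1}=\eta(1-\eta)\,\mathcal Q_{N-1}\circ\Rep\circ\mathcal L_{N-1},
\]
and the identity branch leaves $\eta^2\,\Mn_N\circ\mathcal L_{N-1}$.

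Formally, I would prove by downward induction on $j$ the identity
\[
\mathcal N=\eta^{N+1-j}\,\mathcal Q_{j-1}\circ\mathcal L_{j-1}+\sum_{k=j}^{N}\eta^{N-k}(1-\eta)\,\mathcal Q_k\circ\Rep\circ\mathcal L_k,
\]
for $j=N,\dots,1$, using the convention $\mathcal Q_k\circ\mathcal L_k$ for the prefix form. The inductive step uses two recursions: $\mathcal L_k=\Mn_k\circ\Dn\circ\mathcal L_{k-1}$ for $k\ge1$, and $\mathcal Q_{k-1}=\mathcal Q_k\circ\Mn_k$.

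At the last stage, the depolarizer of link $0$ acts on $\mathcal L_0=\mathrm{id}$, and $\mathcal Q_0\circ\mathrm{id}$ has become $\Mn_N\circ\cdots\circ\Mn_1=\mathcal F$. The identity branch then produces $\eta^{N+1}\mathcal F$, and the replacement branch produces the $k=0$ summand $\eta^{N}(1-\eta)\,\mathcal Q_0\circ\Rep$.

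No step is conceptually hard. The main risk is bookkeeping: keeping the index conventions straight. Specifically, the $\Dn$ following $\Mn_k$ belongs to link $k$, $\mathcal L_k$ ends with $\Mn_k$ rather than with $\Dn$, and the exponent of $\eta$ counts the $N-k$ later depolarizers, which all took the identity branch.

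As a sanity check I would verify that the weights sum correctly: $\eta^{N+1}+\sum_{k=0}^N\eta^{N-k}(1-\eta)=1$, which is a telescoping sum. This confirms that every one of the $2^{N+1}$ branch terms is counted exactly once, classified by its last replacement. An alternative direct argument is also available: any branch with some $\Rep$ has a well-defined last replacement at link $k$, and summing freely over the choices at links before $k$ reconstitutes $\mathcal L_k$.
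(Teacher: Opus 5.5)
Your strategy is correct and is essentially the paper's argument. The paper inducts on $N$: it writes $\mathcal N=\Dn\circ\Mn_N\circ\mathcal N'$, expands the outermost depolarizer so that its $\Rep$ branch is the $k=N$ summand, and applies the induction hypothesis to the shorter chain $\mathcal N'$. Your backward peeling at fixed $N$ is that same recursion unrolled, since $\mathcal N'=\Dn\circ\mathcal L_{N-1}$.

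The general invariant you display is off by one, however. After the depolarizers of links $j,\dots,N$ have been expanded, which is what the sum $\sum_{k=j}^{N}$ and the weight $\eta^{N+1-j}$ record, the remainder is
$$\eta^{N+1-j}\,\mathcal Q_j\circ\mathcal L_j=\eta^{N+1-j}\,\mathcal Q_{j-1}\circ\Dn\circ\mathcal L_{j-1}.$$
It is not $\eta^{N+1-j}\,\mathcal Q_{j-1}\circ\mathcal L_{j-1}$. As written, the $j=N$ case is already inconsistent with your first step, because $\mathcal L_N=\Mn_N\circ\Dn\circ\mathcal L_{N-1}\neq\Mn_N\circ\mathcal L_{N-1}$ when $\eta<1$.

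The fix is to keep the depolarizer of link $j-1$ unexpanded in the remainder. Equivalently, run the invariant with $\mathcal Q_j\circ\mathcal L_j$ from $j=N$ (your first step, since $\mathcal Q_N=\mathrm{id}$) down to $j=0$, where $\mathcal Q_0\circ\mathcal L_0=\mathcal F$ gives the lemma. The inductive step is then exactly the computation you carried out for link $N-1$. Your description of the final stage at link $0$ already uses this corrected form.

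One small point: the telescoping identity for the weights is only a consistency check. It does not by itself show that each of the $2^{N+1}$ branches is counted exactly once. That is what the induction, or your last-replacement counting argument, establishes.
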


\begin{proof}
We prove Eq.~\eqref{eq:lastR-sm} by induction on $N$. For $N=0$,
$\mathcal N=\Dn=\eta\,\mathrm{id}+(1-\eta)\Rep$, which is
Eq.~\eqref{eq:lastR-sm} with
$\mathcal F=\mathcal L_0=\mathcal Q_0=\mathrm{id}$. For $N\ge1$, write
$\mathcal N=\Dn\circ\Mn_N\circ\mathcal N'$, where $\mathcal N'$ is the
composite map of the chain of links $0,\dots,N-1$, and expand the
outermost depolarizer,
$$
\mathcal N=\eta\,\Mn_N\circ\mathcal N'+(1-\eta)\,\Rep\circ\Mn_N\circ\mathcal N'.
$$
The induction hypothesis gives
$\mathcal N'=\eta^{N}\mathcal F'
+\sum_{k=0}^{N-1}\eta^{N-1-k}(1-\eta)\,\mathcal Q_k'\circ\Rep\circ\mathcal L_k$
with $\mathcal F':=\Mn_{N-1}\circ\cdots\circ\Mn_1$ and
$\mathcal Q_k':=\Mn_{N-1}\circ\cdots\circ\Mn_{k+1}$. Since
$\Mn_N\circ\mathcal F'=\mathcal F$ and $\Mn_N\circ\mathcal Q_k'=\mathcal Q_k$,
the first term is
$\eta^{N+1}\mathcal F+\sum_{k=0}^{N-1}\eta^{N-k}(1-\eta)\,\mathcal Q_k\circ\Rep\circ\mathcal L_k$.
Since $\Mn_N\circ\mathcal N'=\mathcal L_N$ and $\mathcal Q_N=\mathrm{id}$,
the second term is the $k=N$ summand. Together they are
Eq.~\eqref{eq:lastR-sm}.
\end{proof}

The $k$th summand of Eq.~\eqref{eq:lastR-sm} collects the trajectories
in which the depolarizer of link $k$ replaces the qubit and every later
depolarizer leaves it intact. These are the trajectories whose last
replacement is at link $k$, and the first term is the trajectory without
any replacement.

\section{Product structure of the noise terms}\label{sm:product}

Each replacement trajectory in Eq.~\eqref{eq:lastR-sm} yields a product
positive operator on the boundary. For any operator $\sigma_{AB}$,
$(\II\otimes\Rep)(\sigma_{AB})=\Tr_B(\sigma_{AB})\otimes I_2/2$, since
both sides are linear and agree on every
$\ket i\!\bra j\otimes\ket k\!\bra l$. Applying the $k$th term to $\PhiP$ factor by factor and using that $\mathcal Q_k$ acts on $B$ alone, we obtain
\begin{equation}\label{eq:final-product}
(\II\otimes\mathcal Q_k\circ\Rep\circ\mathcal L_k)(\PhiP)
=G_k^{(0)}\otimes H_k^{(0)},
\qquad
G_k^{(0)}:=\Tr_B\bigl[(\II\otimes\mathcal L_k)(\PhiP)\bigr],
\qquad
H_k^{(0)}:=\mathcal Q_k(I_2/2).
\end{equation}
Both factors are positive, because $\mathcal L_k$ and $\mathcal Q_k$ are
compositions of completely positive maps, and
$G_0^{(0)}=H_N^{(0)}=I_2/2$ at the two ends. The index $k$ labels the
trajectory, not the link qubits $A_k,B_k$, and the superscript $(0)$
marks the marginals before the alignment of Sec.~\ref{sm:bellblock}.
Substituting into Eq.~\eqref{eq:lastR-sm} gives the unnormalized
decomposition
\begin{equation}\label{eq:state-grouped-sm}
X_N:=(\II\otimes\mathcal N)(\PhiP)
=\eta^{N+1}P_K
+\sum_{k=0}^{N}w_k\,G_k^{(0)}\otimes H_k^{(0)},
\qquad
w_k:=\eta^{N-k}(1-\eta),
\end{equation}
where $P_K=(\II\otimes K)\PhiP(\II\otimes K)^\dagger$ and
$K=M_N^*\cdots M_1^*$. The first term is the trajectory in which no
replacement fired, and the sum is the separable term $S_N$ of main-text
Eq.~\eqref{eq:state-grouped}.

\section{Bell-block inequality and pure-component alignment}\label{sm:bellblock}

The refined decomposition of Eq.~\eqref{eq:state-grouped-sm} separates the
chain into one pure component and a sum of product noise terms. The
pure component is
$$
P_K=(\II\otimes K)\PhiP(\II\otimes K)^\dagger,\qquad
K=M_N^*\cdots M_1^* .
$$
Let $K=URV^\dagger$ be its singular value decomposition (SVD) with
$R=\mathrm{diag}(\alpha,\beta)$, $\alpha\ge\beta>0$, and set
$\Gamma_K:=|\det K|=\alpha\beta$. Expanding
$\ket{\PhiP}=\tfrac1{\sqrt2}\sum_i\ket{ii}$ gives
$(\II\otimes V^\dagger)\ket{\PhiP}=(V^*\otimes\II)\ket{\PhiP}$, so
$(\II\otimes K)\ket{\PhiP}=(V^*\otimes U)(\II\otimes R)\ket{\PhiP}$ and
\begin{equation}\label{eq:pure-aligned}
P_K=(V^*\otimes U)\,P_R\,(V^*\otimes U)^\dagger,
\qquad
P_R:=(\II\otimes R)\PhiP(\II\otimes R)
=\tfrac12\bigl(\alpha\ket{00}+\beta\ket{11}\bigr)
\bigl(\alpha\bra{00}+\beta\bra{11}\bigr).
\end{equation}
The partial transpose of $P_R$ has the eigenvalues $\alpha^2/2$,
$\beta^2/2$, $\alpha\beta/2$, and $-\alpha\beta/2$. The partial
transpose of $P_K$ is $(V^*\otimes U^*)P_R^{T_B}(V^*\otimes U^*)^\dagger$,
which has the same spectrum, so it has a single negative eigenvalue,
$-\Gamma_K/2$. Multiplicativity of the determinant along
$K=M_N^*\cdots M_1^*$ gives
$\Gamma_K=\prod_{i=1}^N|\det M_i|$. Each factor obeys
$|\det M_i|\le 1$: the singular values $s_{i,1},s_{i,2}$ of $M_i$
satisfy $s_{i,1}^2+s_{i,2}^2=\Tr(M_i^\dagger M_i)=2$, so
$|\det M_i|=s_{i,1}s_{i,2}\le\tfrac12(s_{i,1}^2+s_{i,2}^2)=1$,
with equality iff $s_{i,1}=s_{i,2}=1$, that is, iff $M_i$ is
unitary. Hence, $\Gamma_K\le 1$, with equality iff every $M_i$ is
unitary, as stated in the main text.

The following Bell-block inequality controls the product noise
terms. Let
$$
Q:=(\PhiP)^{T_B}.
$$

\begin{lemma}[Bell-block inequality]\label{lem:bellblock}
For any $A,B\succeq 0$ on $\mathbb C^2$,
$$
A\otimes B^T+2\sqrt{\det A\det B}\,Q\succeq 0.
$$
The constant is sharp.
\end{lemma}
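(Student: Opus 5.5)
The plan is a congruence argument. Since $Q=(\PhiP)^{T_B}$ has a negative eigenvalue, the claim is that the positive operator $A\otimes B^T$ can absorb it up to exactly the weight $2\sqrt{\det A\det B}$.

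\emph{Singular case.} If $\det A=0$ or $\det B=0$, the coefficient vanishes. The claim then reduces to $A\otimes B^T\succeq0$, which holds because $B^T\succeq 0$.

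\emph{Congruence.} Assume $A,B\succ0$ and set $c:=2\sqrt{\det A\det B}$. Conjugating by the invertible operator $Z:=A^{-1/2}\otimes (B^T)^{-1/2}$ preserves positivity. It sends $A\otimes B^T$ to $I_4$, so it suffices to show $I_4+c\,ZQZ^\dagger\succeq0$.

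To evaluate $ZQZ^\dagger$, use the partial-transpose identity
\[
(X\otimes Y)\,W^{T_B}\,(X\otimes Y)^\dagger=\bigl[(X\otimes \bar Y)\,W\,(X\otimes\bar Y)^\dagger\bigr]^{T_B}.
\]
Because $B$ is Hermitian, $(B^T)^{-1/2}=\overline{B^{-1/2}}$, so here $\bar Y=B^{-1/2}$. Next apply the transpose trick $(X\otimes I)\ket{\PhiP}=(I\otimes X^{T})\ket{\PhiP}$, already used in Sec.~\ref{sm:bellblock}. This gives
\[
ZQZ^\dagger=P_C^{T_B},\qquad P_C:=(\II\otimes C)\PhiP(\II\otimes C)^\dagger,\qquad C:=B^{-1/2}A^{-T/2}.
\]
It remains to check the index bookkeeping, including which transpose or conjugate lands on $A$. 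Whatever the correct form turns out to be, it does not affect $|\det C|$, which is
\[
|\det C|=(\det A\det B)^{-1/2}.
\]

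\emph{Spectrum.} The SVD alignment of Sec.~\ref{sm:bellblock} shows that $P_C^{T_B}$ has exactly one negative eigenvalue, equal to $-|\det C|/2$. Its other eigenvalues are $\alpha^2/2$, $\beta^2/2$ and $\alpha\beta/2$, all nonnegative. Hence the smallest eigenvalue of $I_4+cP_C^{T_B}$ is
\[
1-\tfrac{c}{2}|\det C|=1-\sqrt{\det A\det B}\,(\det A\det B)^{-1/2}=0 .
\]
This proves $I_4+c\,ZQZ^\dagger\succeq0$, and undoing the congruence gives the inequality.

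\emph{Sharpness.} The same computation shows that the operator has a zero eigenvalue for every $A,B\succ0$. Replacing $c$ by any $c'>c$ therefore makes that eigenvalue $1-c'/c<0$, so the constant cannot be enlarged.

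The only delicate step is the conjugation bookkeeping: keeping track of transposes and complex conjugates so that $ZQZ^\dagger$ is recognized as the partial transpose of a rank-one filtered Bell projector. After that, the determinant cancellation is immediate.
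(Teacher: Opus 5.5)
Your proof is correct, and it takes a different route from the paper's. The paper first uses the $U\otimes U$ invariance of $Q=F/2$ to make $A$ diagonal. It then writes $A\otimes B^T+2tQ$ in the permuted basis $(\ket{00},\ket{11},\ket{01},\ket{10})$ and checks that the Schur complement of the positive definite $\{\ket{00},\ket{11}\}$ block has nonnegative diagonal and zero determinant, which follows from $t^2=ad\,\Delta$. The semidefinite case is handled by continuity, and sharpness is deferred to Ref.~\cite{eac-paper}.

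You instead whiten the product term with the congruence $Z=A^{-1/2}\otimes(B^T)^{-1/2}$. Your bookkeeping is right as written: $ZQZ^\dagger=P_C^{T_B}$ with $C=B^{-1/2}(A^{-1/2})^T$, so the hedge is unnecessary. As you note, any variant of $C$ would give the same $|\det C|=(\det A\det B)^{-1/2}$ anyway. The partial-transpose spectrum from Sec.~\ref{sm:bellblock} then gives the exact minimal eigenvalue $1-\tfrac{c}{2}|\det C|$.

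This buys three things over a Schur-complement certificate:
\begin{itemize}
\item It explains where the constant $2\sqrt{\det A\det B}$ comes from.
\item Via Sylvester's law of inertia, it proves sharpness at every positive definite pair, which the paper does not prove at all.
\item It handles the singular case directly, without a continuity argument.
\end{itemize}
It also yields Corollary~\ref{cor:aligned-bellblock} in one step. The same congruence sends $\Omega_R=P_R^{T_B}$ to $P_{C'}^{T_B}$ with $C'=B^{-1/2}R(A^{-1/2})^T$ and $|\det C'|=\Gamma_K/\sqrt{\det A\det B}$. This gives the constant $2\sqrt{\det A\det B}/\Gamma_K$ directly, with no separate rescaling by $R^{-1}$.

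The paper's route, by contrast, is a fully explicit coordinate check. It does not rely on the partial-transpose spectrum of a filtered Bell projector.
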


The sharp constant $2\sqrt{\det A\det B}$ is the entanglement
absorption capacity $\EAC(A\otimes B)$ of product noise in the
Bell frame, in the language of Ref.~\cite{eac-paper}, where the
inequality is proved in more detail. We include the short derivation
needed here to keep the present argument self-contained.

\begin{proof}
By continuity, it suffices to prove the claim for positive definite
$A,B$. Since $Q=F/2$, where $F$ is the swap operator and
$(U\otimes U)F(U^\dagger\otimes U^\dagger)=F$, a joint unitary
congruence lets us diagonalize $A$ without changing the statement:
conjugating by $U\otimes U$ fixes $Q$ and preserves both
determinants, while sending $B^T$ to $UB^TU^\dagger$, again positive
definite. Choosing $U$ to diagonalize $A$, we may therefore take $A$
diagonal and $B$ a general positive definite matrix. Write
$$
A=\begin{pmatrix}a&0\\0&d\end{pmatrix},
\qquad
B=\begin{pmatrix}b&z\\ z^*&c\end{pmatrix},
\qquad
\Delta:=\det B=bc-|z|^2,
$$
and set $t:=\sqrt{\det A\det B}=\sqrt{ad\,\Delta}$. In the permuted
basis $(\ket{00},\ket{11},\ket{01},\ket{10})$,
$A\otimes B^T+2tQ=A\otimes B^T+tF$ is
$$
\begin{pmatrix}
ab+t&0&az^*&0\\
0&dc+t&0&dz\\
az&0&ac&t\\
0&dz^*&t&db
\end{pmatrix}.
$$
The upper-left diagonal block is positive definite. Its Schur complement is
$$
\begin{pmatrix}
\dfrac{a(a\Delta+ct)}{ab+t}&t\\[6pt]
t&\dfrac{d(d\Delta+bt)}{dc+t}
\end{pmatrix}.
$$
The diagonal entries of the Schur complement are nonnegative, and its
determinant is zero because $t^2=ad\,\Delta$. Hence, the Schur
complement, and therefore the full matrix, is positive semidefinite.
Sharpness of the constant is shown in Ref.~\cite{eac-paper} and is
not used below.
\end{proof}

\begin{corollary}[Aligned one-block PPT inequality]\label{cor:aligned-bellblock}
Let
$$
\Omega_R:=(\II\otimes R)Q(\II\otimes R).
$$
Then for any $A,B\succeq0$,
\begin{equation}\label{eq:aligned-bellblock}
A\otimes B^T
+\frac{2\sqrt{\det A\,\det B}}{\Gamma_K}\,\Omega_R
\succeq0 .
\end{equation}
\end{corollary}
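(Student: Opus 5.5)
The plan is to reduce Corollary~\ref{cor:aligned-bellblock} to Lemma~\ref{lem:bellblock} by a single congruence with the invertible operator $\II\otimes R$. This is possible because $R=\mathrm{diag}(\alpha,\beta)$ with $\alpha\ge\beta>0$, so $R$ is real, diagonal, positive definite and invertible, with $\det R=\alpha\beta=\Gamma_K$. Congruence by an invertible operator preserves positive semidefiniteness in both directions, so it suffices to produce a positive semidefinite operator $Y$ with $(\II\otimes R)\,Y\,(\II\otimes R)$ equal to the left side of Eq.~\eqref{eq:aligned-bellblock}.

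Given $A,B\succeq0$, I would define the rescaled operator $B':=R^{-1}BR^{-1}$. Each property needed for the lemma then follows directly.
\begin{itemize}
\item Since $R^{-1}$ is Hermitian, $B'$ is again positive semidefinite.
\item Because $R$ is real and diagonal, $R^{\mathsf T}=R$. Hence $(B')^{\mathsf T}=R^{-1}B^{\mathsf T}R^{-1}$, and so $(\II\otimes R)\,(A\otimes B'^{\mathsf T})\,(\II\otimes R)=A\otimes B^{\mathsf T}$.
\item Its determinant is $\det B'=\det B/(\alpha\beta)^2=\det B/\Gamma_K^2$, so $2\sqrt{\det A\,\det B'}=2\sqrt{\det A\,\det B}/\Gamma_K$.
\end{itemize}

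Next I would apply Lemma~\ref{lem:bellblock} to the pair $(A,B')$. This gives
$$
Y:=A\otimes B'^{\mathsf T}+\frac{2\sqrt{\det A\,\det B}}{\Gamma_K}\,Q\succeq0 .
$$
Conjugating $Y$ by the Hermitian operator $\II\otimes R$, i.e.\ forming $(\II\otimes R)Y(\II\otimes R)$, preserves positivity. The first term of $Y$ becomes $A\otimes B^{\mathsf T}$ by the second property above. The second term becomes the same scalar times $(\II\otimes R)Q(\II\otimes R)=\Omega_R$, by the definition of $\Omega_R$. The result is exactly Eq.~\eqref{eq:aligned-bellblock}.

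There is no real obstacle in this argument; the only points needing care are the two bookkeeping checks. The first is that the transpose interacts correctly with the rescaling, which holds because $R^{\mathsf T}=R$. The second is that the determinant scaling produces exactly the factor $1/\Gamma_K$, which holds because $\beta>0$ is guaranteed by the rank-two assumption on the filters $M_i$.
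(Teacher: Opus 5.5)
Your proposal is correct and follows the paper's proof essentially step for step. The paper sets $\widetilde B:=R^{-1}BR^{-1}$, applies Lemma~\ref{lem:bellblock} to $(A,\widetilde B)$, and conjugates by $\II\otimes R$, using that $R$ is real and diagonal and that $\det\widetilde B=\det B/(\alpha\beta)^2$.
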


\begin{proof}
Set $\widetilde B:=R^{-1}BR^{-1}$. Lemma~\ref{lem:bellblock} gives
$$
A\otimes\widetilde B^T
+2\sqrt{\det A\,\det\widetilde B}\,Q\succeq0 .
$$
Conjugating by $\II\otimes R$ preserves positivity and turns the
first term into $A\otimes R\widetilde B^TR=A\otimes B^T$, since $R$ is
real and diagonal. Since
$\det\widetilde B=\det B/(\alpha\beta)^2$, the second term becomes
$$
\frac{2\sqrt{\det A\,\det B}}{\alpha\beta}
(\II\otimes R)Q(\II\otimes R),
$$
which is Eq.~\eqref{eq:aligned-bellblock}, since $\alpha\beta=\Gamma_K$
and $(\II\otimes R)Q(\II\otimes R)=\Omega_R$.
\end{proof}

We now align the chain state with the Schmidt frame of the pure component.
Apply the local unitary
$$
W:=V_A^T\otimes U_B^\dagger .
$$
Since $W=(V^*\otimes U)^\dagger$, Eq.~\eqref{eq:pure-aligned} gives
$WP_KW^\dagger=P_R$, and
$$
P_R^{T_B}=(\II\otimes R)Q(\II\otimes R)=\Omega_R .
$$
The product marginals $G_k^{(0)},H_k^{(0)}$ of
Eq.~\eqref{eq:state-grouped-sm} have aligned versions
$$
G_k:=V^TG_k^{(0)}V^*,\qquad
H_k:=U^\dagger H_k^{(0)}U,
$$
so that $W(G_k^{(0)}\otimes H_k^{(0)})W^\dagger=G_k\otimes H_k$.
These local unitaries preserve determinants, so determinant bounds
proved below for $G_k^{(0)},H_k^{(0)}$ also hold for $G_k,H_k$. Therefore,
the aligned partial transpose of the unnormalized chain state is
\begin{equation}\label{eq:rho-aligned-PT}
(WX_NW^\dagger)^{T_B}
=\eta^{N+1}\Omega_R+\sum_{k=0}^{N}w_k\,G_k\otimes H_k^T .
\end{equation}
Local-unitary alignment does not affect separability, and
normalization by $\Tr X_N$ does not affect PPT positivity. Hence, the
remaining task is only to show that the positive product blocks in
Eq.~\eqref{eq:rho-aligned-PT} compensate the negative direction of
$\eta^{N+1}\Omega_R$.

\section{\texorpdfstring{Determinant bounds on $G_k$ and $H_k$}{Determinant bounds on Gk and Hk}}\label{sm:detbounds}

To apply the aligned Bell-block inequality to the noise terms of
Eq.~\eqref{eq:state-grouped-sm}, we bound the determinants of the
marginals $G_k,H_k$ uniformly over the filters $\{M_i\}$. This is
the technical point that makes the main-text mechanism independent
of the postselected swaps: a filter can reshape the local marginals,
but the depolarizing steps never reduce the determinant lower bound.
Thus, no choice of rank-two outcomes can make the injected separable
noise too weak to compensate the exponentially small pure-component
contribution.

\begin{quote}\itshape
\textbf{Claim.}
Let $G_k^{(0)},H_k^{(0)}$ be the pre-alignment marginals in
Eq.~\eqref{eq:state-grouped-sm}. For every $k=0,\dots,N$ and all
filters $\{M_i\}$,
$\det G_k^{(0)}\ge\tfrac14\prod_{i=1}^k|\det M_i|^2$ and
$\det H_k^{(0)}=\tfrac14\prod_{i=k+1}^N|\det M_i|^2$.
Consequently, $4\sqrt{\det G_k^{(0)}\det H_k^{(0)}}\ge\Gamma_K$. By the
determinant-preservation argument in Sec.~\ref{sm:bellblock}, the same
bounds also hold for the aligned marginals $G_k,H_k$ used in
Eq.~\eqref{eq:rho-aligned-PT}.
\end{quote}

\begin{proof}
For $H_k^{(0)}$, $\det\Mn_i(X)=\det(M_i^*XM_i^T)=|\det M_i|^2\det X$,
so iterating from $\det(I_2/2)=1/4$ gives the stated equality.

For $G_k^{(0)}$, let $\mathcal L_k^\dagger$ denote the Hilbert--Schmidt
adjoint of $\mathcal L_k$, so that
$\Tr[\mathcal L_k(X)]=\Tr[\mathcal L_k^\dagger(I_2)\,X]$.
Expanding $\PhiP=\tfrac12\sum_{ij}\ket{ii}\!\bra{jj}$,
$$
G_k^{(0)}=\tfrac12\sum_{ij}\ket i\!\bra j\,\Tr[\mathcal L_k(\ket i\!\bra j)]
=\tfrac12\,[\mathcal L_k^\dagger(I_2)]^T,
\qquad
\det G_k^{(0)}=\tfrac14\det[\mathcal L_k^\dagger(I_2)] .
$$
The adjoints of the building blocks are $\Mn_i^\dagger(Y)=M_i^TYM_i^*$ and
$\Dn^\dagger=\Dn$, so
$\mathcal L_k^\dagger(I_2)=\Dn(\Mn_1^\dagger(\Dn(\cdots\Mn_k^\dagger(I_2))))$
is obtained from $I_2$ by applying $\Mn_k^\dagger,\Dn,\dots,\Mn_1^\dagger,\Dn$
in turn. The filter step
$Y\mapsto M_i^TYM_i^*$ multiplies the determinant by $|\det M_i|^2$ and
preserves positivity. The depolarizing step $Y\mapsto\Dn(Y)$ never
decreases the determinant of a positive operator, because for $Y\succeq0$
with eigenvalues $y_1,y_2$, the eigenvalues of $\Dn(Y)$ are
$\eta y_i+(1-\eta)(y_1+y_2)/2$, which gives
\begin{equation}\label{eq:depol-det-bound}
\det\Dn(Y)-\det Y=\tfrac{1-\eta^2}{4}(y_1-y_2)^2\ \ge\ 0 .
\end{equation}
Hence, $\det[\mathcal L_k^\dagger(I_2)]\ge\prod_{i=1}^k|\det M_i|^2$, which
is the stated bound. The product of the two results gives
$4\sqrt{\det G_k^{(0)}\det H_k^{(0)}}\ge\prod_{i=1}^N|\det M_i|=\Gamma_K$.
\end{proof}

\section{\texorpdfstring{PPT positivity and the separability threshold}{PPT positivity and the separability threshold}}\label{sm:cancellation}

We now prove Theorem~\ref{thm:iso} in the form used in the main
text. In the decomposition of Eq.~\eqref{eq:state-grouped-sm}, the
partial transpose of the pure component $\eta^{N+1}P_K$ has a single
negative eigenvalue, $-\eta^{N+1}\Gamma_K/2$
(Sec.~\ref{sm:bellblock}), while every remaining term
$w_kG_k^{(0)}\otimes H_k^{(0)}$ is a positive semidefinite product
operator, whose partial transpose is again positive semidefinite. We
show that these product terms compensate the negative eigenvalue
whenever $\eta^{N+1}\le 1/3$, so that the partial transpose of the
whole state is positive semidefinite.

The construction follows the decomposition-certificate technique of
Ref.~\cite{eac-paper}. After a local-unitary alignment, the partial
transpose of the chain state is $\eta^{N+1}\Omega_R$, whose only
negative eigenvalue is $-\eta^{N+1}\Gamma_K/2$, plus a sum of
positive semidefinite product operators. Each product operator
compensates a multiple of $\Omega_R$ up to its entanglement absorption
capacity (Corollary~\ref{cor:aligned-bellblock}), and the weighted
capacities add up to the total multiple of $\Omega_R$ that the sum
compensates. What is specific to the present chain
problem is that the reduction (Secs.~\ref{sm:slocc}--\ref{sm:caseA})
and the uniform determinant bounds (Sec.~\ref{sm:detbounds}) give
every capacity the same lower bound $\tfrac12$, independent of the
link and of the record of rank-two filters. That common lower
bound is what makes the threshold uniform.

\begin{theorem}[Isotropic chain horizon]\label{thm:iso}
Let $\omega_\eta=(\II\otimes\Dn)(\PhiP)$ with $0<\eta<1$, and let
$\widetilde\rho_N^{\mathrm{iso}}$ be the end-to-end state of the chain of $N+1$
copies of $\omega_\eta$ joined by $N$ rank-two intermediate
swap matrices $\{M_i\}$. If $\eta^{N+1}\le 1/3$, then
$\widetilde\rho_N^{\mathrm{iso}}$ is separable, uniformly over the choice of
$\{M_i\}$.
\end{theorem}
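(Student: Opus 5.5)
The plan is to verify the Peres--Horodecki criterion directly on the aligned partial transpose of Eq.~\eqref{eq:rho-aligned-PT}. For two qubits, positivity of the partial transpose is equivalent to separability. Local-unitary alignment by $W$ and normalization by $\Tr X_N$ do not affect separability. It therefore suffices to show
\[
\eta^{N+1}\Omega_R+\sum_{k=0}^{N}w_k\,G_k\otimes H_k^T\succeq0,
\qquad w_k=\eta^{N-k}(1-\eta),
\]
under the hypothesis $\eta^{N+1}\le 1/3$. All rank-two filters have $\Gamma_K>0$, so $R$ is invertible and Corollary~\ref{cor:aligned-bellblock} applies.

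\emph{Step 1: per-block capacities.} For each trajectory $k$, set $c_k:=2\sqrt{\det G_k\det H_k}/\Gamma_K$. Corollary~\ref{cor:aligned-bellblock} gives $G_k\otimes H_k^T+c_k\Omega_R\succeq0$. The Claim of Sec.~\ref{sm:detbounds} gives $4\sqrt{\det G_k\det H_k}\ge\Gamma_K$, and this bound survives the alignment because determinants are unitarily invariant. Hence $c_k\ge 1/2$ for every $k$, uniformly in the record $\{M_i\}$; this uniformity is the key input.

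\emph{Step 2: downward monotonicity.} For any $0\le\theta\le c_k$ we have $G_k\otimes H_k^T+\theta\,\Omega_R\succeq0$. To see this, write the operator as the convex combination
\[
(\theta/c_k)\bigl(G_k\otimes H_k^T+c_k\Omega_R\bigr)+(1-\theta/c_k)\,G_k\otimes H_k^T .
\]
The second term is positive semidefinite because it is the partial transpose of a product of positive operators.

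\emph{Step 3: distributing the pure weight.} Choose nonnegative weights $\theta_k$ with $\theta_k\le 1/2\le c_k$ and $\sum_k w_k\theta_k=\eta^{N+1}$. The sum splits blockwise as
\[
\eta^{N+1}\Omega_R+\sum_k w_kG_k\otimes H_k^T=\sum_k w_k\bigl(G_k\otimes H_k^T+\theta_k\Omega_R\bigr),
\]
and each summand is positive semidefinite by Step 2. Such $\theta_k$ exist iff $\eta^{N+1}\le\tfrac12\sum_k w_k$. The geometric sum is
\[
\sum_{k=0}^N\eta^{N-k}(1-\eta)=1-\eta^{N+1},
\]
so the condition reads $\eta^{N+1}\le\tfrac12(1-\eta^{N+1})$, i.e.\ $\eta^{N+1}\le1/3$, exactly the hypothesis. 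One can take all $\theta_k$ equal to $\eta^{N+1}/(1-\eta^{N+1})\le1/2$.

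The main obstacle is Step 1, namely the uniform lower bound on the absorption capacities. A filter $M_i$ can make the marginals very anisotropic, and a naive estimate (e.g.\ via minimal eigenvalues) would degrade with the record. The argument only goes through because Corollary~\ref{cor:aligned-bellblock} is stated in terms of determinants, with the pure component's $\Gamma_K$ in the denominator. This matches the determinant bookkeeping of Sec.~\ref{sm:detbounds}: filters multiply both sides by the same $|\det M_i|^2$, and depolarizers never lower the determinant. Granting those earlier results, the rest is the elementary convex-splitting and geometric-sum computation above.
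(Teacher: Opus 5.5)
Your proposal is correct and follows the paper's proof essentially step for step. You use the same aligned partial transpose, the same per-block capacities $c_k\ge\tfrac12$ from Corollary~\ref{cor:aligned-bellblock} and the determinant Claim, and the same geometric sum $\sum_k w_k=1-\eta^{N+1}$ that yields the threshold $1/3$. The only difference is cosmetic: you split the weight $\eta^{N+1}$ block by block with equal $\theta_k=\eta^{N+1}/(1-\eta^{N+1})$. The paper instead first sums the capacity inequalities into $\sum_k w_kG_k\otimes H_k^T+S\,\Omega_R\succeq0$ and then takes one global convex combination with $\theta=\eta^{N+1}/S$.
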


\begin{proof}
By Sec.~\ref{sm:bellblock}, after alignment by the local unitary
$W:=V_A^T\otimes U_B^\dagger$ associated with the SVD
$K=M_N^*\cdots M_1^*=URV^\dagger$, the partial transpose of the
unnormalized chain state $X_N$ of Eq.~\eqref{eq:state-grouped-sm} is
\begin{equation}\label{eq:sm-aligned-PT}
(WX_N W^\dagger)^{T_B}
=\eta^{N+1}\,\Omega_R
+\sum_{k=0}^{N}w_k\,G_k\otimes H_k^T ,
\end{equation}
where $\Omega_R=P_R^{T_B}$ has a single negative eigenvalue
$-\Gamma_K/2$ and every $G_k\otimes H_k^T$ is positive semidefinite. For each $k$,
Corollary~\ref{cor:aligned-bellblock} gives
$G_k\otimes H_k^T+c_k\Omega_R\succeq0$, where
$c_k:=2\sqrt{\det G_k\det H_k}/\Gamma_K$ is the entanglement absorption
capacity of the $k$th block in the language of Ref.~\cite{eac-paper}, and
Sec.~\ref{sm:detbounds} gives $c_k\ge\tfrac12$. Multiplying the
$k$th inequality by $w_k\ge0$ and summing over $k$ gives
$$
\sum_{k=0}^{N}w_k\,G_k\otimes H_k^T+S\,\Omega_R\succeq0,
\qquad
S:=\sum_{k=0}^{N}w_kc_k\ \ge\ \tfrac12\sum_{k=0}^{N}w_k
=\frac{1-\eta^{N+1}}{2}.
$$
The coefficient of $\Omega_R$ in Eq.~\eqref{eq:sm-aligned-PT} is
$\eta^{N+1}$, and $\eta^{N+1}\le(1-\eta^{N+1})/2$ if and only if
$\eta^{N+1}\le1/3$. Under this condition
$\theta:=\eta^{N+1}/S\in(0,1]$, and Eq.~\eqref{eq:sm-aligned-PT} can
be rewritten as
$$
(WX_NW^\dagger)^{T_B}
=\theta\Bigl(\sum_{k=0}^{N}w_k\,G_k\otimes H_k^T+S\,\Omega_R\Bigr)
+(1-\theta)\sum_{k=0}^{N}w_k\,G_k\otimes H_k^T ,
$$
a sum of two positive semidefinite operators, so
$(WX_NW^\dagger)^{T_B}\succeq0$. Undoing the alignment conjugates the
partial transpose by the local unitary $V^*\otimes U^*$, and
normalization divides by $\Tr X_N>0$. Both preserve positive
semidefiniteness, so
$(\widetilde\rho_N^{\mathrm{iso}})^{T_B}\succeq0$ and
$\widetilde\rho_N^{\mathrm{iso}}$ is separable by the Peres--Horodecki
criterion~\cite{Peres1996,Horodecki1996}. The only property of the
record used is that each $M_i$ has rank two, which gives
$\Gamma_K>0$, so the threshold $1/3$ is the same for every record
$\{M_i\}$.
\end{proof}

\medskip
\noindent
\emph{The threshold is sharp.} Take every filter unitary, which is
allowed by $\Tr(M_i^\dagger M_i)=2$. The depolarizer $\Dn$ commutes
with unitary conjugation, so the chain channel collapses to
$\mathcal N=\mathcal U\circ\mathcal D_{\eta^{N+1}}$ for a single
unitary conjugation $\mathcal U$, and the end-to-end state is
$\omega_{\eta^{N+1}}$ up to a local unitary. An isotropic state
$\omega_p$ is entangled if and only if $p>1/3$, so the
end-to-end state is entangled for every $\eta^{N+1}>1/3$. No threshold smaller than
$1/3$ can therefore certify separability for all records of
rank-two filters.

\medskip
\noindent
\emph{Consequence.} Combined with the reduction
(Secs.~\ref{sm:slocc}--\ref{sm:caseA}), Theorem~\ref{thm:iso} proves the body's
Theorem~\ref{thm:rank4}: for every entangled full-rank two-qubit
link, the end-to-end state is separable for every realized record
whenever Eq.~\eqref{eq:horizon} holds.
Conditioning covers adaptive protocols. On any realized
branch, the complete outcome record fixes the filters $\{M_i\}$,
and swaps at distinct nodes act on disjoint qubit pairs, so their
time order does not change the branch state. The threshold holds
uniformly over all records, hence over every adaptive, sequential,
or parallel postselected protocol.

\paragraph{Heterogeneous links.}
The argument extends to link-dependent parameters. If the isotropic parameters are
$\eta_0,\ldots,\eta_N$, the filter-only and last-replacement weights
become
$$
w_*:=\prod_{i=0}^{N}\eta_i,
\qquad
w_k:=(1-\eta_k)\prod_{j=k+1}^{N}\eta_j,\qquad k=0,\ldots,N.
$$
They telescope to $\sum_{k=0}^{N}w_k=1-w_*$. The product-structure
and determinant-bound proofs are unchanged, since each depolarizing
step still preserves the determinant lower bound in
Eq.~\eqref{eq:depol-det-bound}. Therefore, $S\ge\tfrac12(1-w_*)$,
the coefficient of $\Omega_R$ is $w_*$, and the same argument proves
separability whenever $w_*=\prod_i\eta_i\le 1/3$. Applying the
reduction of Secs.~\ref{sm:slocc}--\ref{sm:caseA} separately to
each full-rank elementary link gives the heterogeneous extension
stated below Theorem~\ref{thm:rank4} in the main text.

\section{The entanglement budget: the route score and its low-noise approximation}\label{sm:budget}

Throughout, a route $P$ is a chain of $N+1$ entangled full-rank
elementary links $i=0,\dots,N$ with noise floors
$\lambda_i:=4\mu_{\min}(\rho_i)\in(0,1)$ and isotropic parameters
$\eta_i:=2(1-\lambda_i)/(2-\lambda_i)\in(0,1)$, as in
Eq.~\eqref{eq:eta-of-lambda}. Define
\begin{equation}\label{eq:chi-def}
\chi_i:=\ln\frac{1}{\eta_i}
=\ln\frac{2-\lambda_i}{2(1-\lambda_i)},
\qquad
\beta_i:=\frac{\chi_i}{\ln3}.
\end{equation}
The route score is
\begin{equation}\label{eq:beta-exact}
\beta(P):=\sum_{i\in P}\beta_i
=\frac{1}{\ln3}\sum_{i\in P}
\ln\frac{2-\lambda_i}{2(1-\lambda_i)},
\end{equation}
so $\beta_i$ is the link cost of main-text Eq.~\eqref{eq:beta-link}
and $\beta(P)$ is the route score of
main-text Eq.~\eqref{eq:budget}. Since
$\prod_{i\in P}\eta_i\le1/3$ if and only if
$\sum_{i\in P}\chi_i\ge\ln3$, that is, if and only if
$\beta(P)\ge1$, the heterogeneous form of
Theorem~\ref{thm:rank4}
(Sec.~\ref{sm:cancellation}) gives the branchwise implication
\begin{equation}\label{eq:beta-nogo}
\boxed{\;
\beta(P)\ge1
\ \Longrightarrow\
\substack{\text{every postselected}\\[-2pt]
\text{branch is separable}}\; .}
\end{equation}
The condition $\beta(P)<1$ means only that the
rejection rule of Eq.~\eqref{eq:beta-nogo} has not ruled out the route. It does not
guarantee entanglement delivery.

\begin{lemma}[Cost sandwich]\label{lem:chisandwich}
For every $\lambda\in(0,1)$,
$$
\frac{\lambda}{2}\ \le\ \chi(\lambda)\ \le\ \frac{\lambda}{2(1-\lambda)} .
$$
\end{lemma}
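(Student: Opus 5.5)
Write $\chi(\lambda)=\ln(2-\lambda)-\ln 2-\ln(1-\lambda)$ and reduce both inequalities to one-variable calculus on $(0,1)$. The convenient form is
$$
\chi(\lambda)=\ln\Bigl(1+\frac{\lambda}{2(1-\lambda)}\Bigr),
$$
since
$$
\frac{2-\lambda}{2(1-\lambda)}=1+\frac{\lambda}{2(1-\lambda)}.
$$
Setting $x:=\lambda/(2(1-\lambda))>0$, we have $\chi=\ln(1+x)$.

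The upper bound is then immediate from $\ln(1+x)\le x$ for $x>-1$: this gives $\chi(\lambda)\le \lambda/(2(1-\lambda))$, with equality only at $\lambda=0$.

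For the lower bound I will argue by derivatives. Set $f(\lambda):=\chi(\lambda)-\lambda/2$, so that $f(0)=0$. Differentiating,
$$
\chi'(\lambda)=\frac{1}{1-\lambda}-\frac{1}{2-\lambda}=\frac{1}{(1-\lambda)(2-\lambda)}.
$$
On $(0,1)$ the product $(1-\lambda)(2-\lambda)$ is strictly less than $2$, because both factors decrease from their values $1$ and $2$ at $\lambda=0$. Hence $\chi'(\lambda)>1/2$, so $f'>0$ there, and therefore $f(\lambda)\ge 0$, i.e.\ $\chi(\lambda)\ge\lambda/2$.

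As an alternative check, $\ln(1+x)\ge x/(1+x)$ gives
$$
\chi\ge\frac{\lambda/(2(1-\lambda))}{(2-\lambda)/(2(1-\lambda))}=\frac{\lambda}{2-\lambda}\ge\frac{\lambda}{2},
$$
which is a slightly sharper bound. I would likely present this second route, since it uses the same substitution for both sides.

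There is no real obstacle here. The only care needed is the algebraic rewriting of $(2-\lambda)/(2(1-\lambda))$ as $1+x$, and confirming that the elementary logarithm inequalities apply on the relevant range, which they do since $x>0$.
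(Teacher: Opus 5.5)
Your proof is correct, and the route you say you would present (the substitution $x=\lambda/(2(1-\lambda))$ with $x/(1+x)\le\ln(1+x)\le x$, passing through the sharper bound $\lambda/(2-\lambda)\ge\lambda/2$) is exactly the paper's argument. Your derivative check is also valid, since $\chi(0)=0$ and $\chi'(\lambda)=1/((1-\lambda)(2-\lambda))>1/2$ on $(0,1)$.
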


\begin{proof}
Write $\chi=\ln(1+x)$ with $x:=\lambda/(2(1-\lambda))$. The
elementary bounds $x/(1+x)\le\ln(1+x)\le x$ give the upper bound
directly and the lower bound
$\chi\ge x/(1+x)=\lambda/(2-\lambda)\ge\lambda/2$.
\end{proof}

\paragraph{Storage time folds into the link state.}
A qubit stored for a time $t$ in a memory with depolarizing time $T_q$
undergoes $\mathcal D_f(\varrho)=f\varrho+(1-f)\Tr(\varrho)\,I_2/2$, where
$f=e^{-t/T_q}$. In the postselected protocol, each qubit of the
chain is either measured in one swap or delivered to an end node, and
all of its storage decoherence precedes that single event. On every
realized branch, the end-to-end state is therefore the output of the
chain applied to the stored links
$$
\rho_i^{\mathrm{st}}:=(\mathcal D_{f_{A_i}}\otimes\mathcal D_{f_{B_i}})(\rho_i),
\qquad
f_{A_i}=e^{-t_{A_i}/T_{A_i}},\quad f_{B_i}=e^{-t_{B_i}/T_{B_i}},
$$
where $t_{A_i}$ and $t_{B_i}$ are the storage times of the two
qubits of link $i$, and $T_{A_i}$ and $T_{B_i}$ are the depolarizing
times of their memories. The memory
channel $\mathcal D_{f_{A_i}}\otimes\mathcal D_{f_{B_i}}$ is positive
and unital, so $\rho_i\succeq\mu_{\min}(\rho_i)I_4$ gives
$\rho_i^{\mathrm{st}}\succeq\mu_{\min}(\rho_i)I_4$. The stored link is
therefore again of full rank, with noise floor
$\lambda_i^{\mathrm{st}}:=4\mu_{\min}(\rho_i^{\mathrm{st}})\ge\lambda_i$. If
the stored link is separable, every branch is separable outright.
Otherwise, Theorem~\ref{thm:rank4} applies to the stored links, with
the noise floors $\lambda_i^{\mathrm{st}}$ evaluated at the storage
times realized on that branch. In general, $\rho_i^{\mathrm{st}}$
depends on $f_{A_i}$ and $f_{B_i}$ separately. For links with
maximally mixed marginals, it depends on them only through the
product $f_{A_i}f_{B_i}=e^{-\theta_i}$, where
\begin{equation}\label{eq:theta-def}
\theta_i:=\frac{t_{A_i}}{T_{A_i}}+\frac{t_{B_i}}{T_{B_i}}
\end{equation}
is the storage load of link $i$, and the stored noise floor has a
closed form.

\begin{lemma}[Noise-floor growth under storage]\label{lem:floorgrowth}
Let $\lambda_0:=4\mu_{\min}(\rho_i)$ be the initial noise floor of
the link $\rho_i$. If $\rho_i$ has maximally mixed marginals, then
the stored link is
$$
\rho_i^{\mathrm{st}}=e^{-\theta_i}\rho_i+(1-e^{-\theta_i})\,I_4/4 .
$$
The stored link, and hence its noise floor, depend on the storage
times only through $\theta_i$, and the stored noise floor, written
$\lambda_i(\theta_i):=\lambda_i^{\mathrm{st}}$, is
\begin{equation}\label{eq:lam-theta}
\lambda_i(\theta_i)=1-(1-\lambda_0)\,e^{-\theta_i}.
\end{equation}
\end{lemma}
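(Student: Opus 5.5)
The plan is to first show that the memory channel acts on a link with maximally mixed marginals as a global depolarizer, and then to read off the smallest eigenvalue directly. Write $\mathcal D_f=f\,\mathrm{id}+(1-f)\Rep$ on each qubit and expand the tensor product
\[
\mathcal D_{f_A}\otimes\mathcal D_{f_B}
=f_Af_B\,\mathrm{id}\otimes\mathrm{id}
+f_A(1-f_B)\,\mathrm{id}\otimes\Rep
+(1-f_A)f_B\,\Rep\otimes\mathrm{id}
+(1-f_A)(1-f_B)\,\Rep\otimes\Rep .
\]
Using the identity $(\II\otimes\Rep)(\sigma_{AB})=\Tr_B(\sigma_{AB})\otimes I_2/2$ from Sec.~\ref{sm:product}, and its mirror image on $A$, each cross term becomes a marginal of $\rho_i$ tensored with $I_2/2$. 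With maximally mixed marginals $\Tr_B\rho_i=\Tr_A\rho_i=I_2/2$, all three noise terms equal $I_4/4$. Their coefficients sum to $1-f_Af_B$, so
\[
\rho_i^{\mathrm{st}}=f_Af_B\,\rho_i+(1-f_Af_B)\,I_4/4 .
\]
Substituting $f_Af_B=e^{-t_{A_i}/T_{A_i}-t_{B_i}/T_{B_i}}=e^{-\theta_i}$ gives the displayed formula. It shows at once that the stored link depends on the storage times only through $\theta_i$.

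For the noise floor, observe that $\rho_i^{\mathrm{st}}$ is an affine function of $\rho_i$ with the identity. The two operators therefore share eigenvectors, and each eigenvalue transforms as
\[
\mu\mapsto e^{-\theta_i}\mu+(1-e^{-\theta_i})/4 .
\]
This map is increasing in $\mu$, so it sends the smallest eigenvalue to the smallest eigenvalue. Multiplying by $4$ and inserting $4\mu_{\min}(\rho_i)=\lambda_0$ gives
\[
\lambda_i(\theta_i)=e^{-\theta_i}\lambda_0+1-e^{-\theta_i}=1-(1-\lambda_0)e^{-\theta_i},
\]
which is Eq.~\eqref{eq:lam-theta}.

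The only step that needs care is the cross terms. We must check that the marginal hypothesis is used on the correct side in each: $\mathrm{id}\otimes\Rep$ needs $\Tr_B\rho_i$, and $\Rep\otimes\mathrm{id}$ needs $\Tr_A\rho_i$. We also need $\Tr\rho_i=1$ for the $\Rep\otimes\Rep$ term. Beyond that, everything is routine bookkeeping.
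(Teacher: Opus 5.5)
Your proposal is correct and follows the paper's proof essentially step for step. Both expand $\mathcal D_{f_A}\otimes\mathcal D_{f_B}$ into four terms and use the maximally mixed marginals to collapse the three replacement terms to $(1-f_Af_B)\,I_4/4$. Both then shift the smallest eigenvalue affinely to obtain $\lambda_i(\theta_i)=1-(1-\lambda_0)e^{-\theta_i}$. Your explicit checks, that each cross term uses the correct marginal and that the affine map is increasing, are details the paper leaves implicit.
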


\begin{proof}
Drop the index $i$. Write $\mathcal D_f=f\,\mathrm{id}+(1-f)\Rep$
and expand $\mathcal D_{f_A}\otimes\mathcal D_{f_B}$ into four
terms. When both marginals of $\rho$ are $I_2/2$, each of the three
terms that contain a replacement maps $\rho$ to $I_4/4$, so
$(\mathcal D_{f_A}\otimes\mathcal D_{f_B})(\rho)=f_Af_B\,\rho+(1-f_Af_B)\,I_4/4$,
which is the stated form of the stored link, since $f_Af_B=e^{-\theta}$.
Hence
$\mu_{\min}(\rho^{\mathrm{st}})=e^{-\theta}\mu_{\min}(\rho)+(1-e^{-\theta})/4$,
and with $\mu_{\min}(\rho)=\lambda_0/4$, multiplying by $4$ gives
$4\mu_{\min}(\rho^{\mathrm{st}})=\lambda_0e^{-\theta}+1-e^{-\theta}
=1-(1-\lambda_0)e^{-\theta}$, which is Eq.~\eqref{eq:lam-theta}.
\end{proof}

\paragraph{The low-noise score.}
Take $N+1$ identical links with initial noise floor $\lambda_0$ and
maximally mixed marginals. This covers the identical isotropic links
of the main text. The floors entering the score are the stored floors
$\lambda_i=\lambda_i(\theta_i)$ of Lemma~\ref{lem:floorgrowth}.
For small $\lambda_0$ and $\theta_i$, Eq.~\eqref{eq:lam-theta}
gives $\lambda_i=\lambda_0+\theta_i+O\bigl((\lambda_0+\theta_i)^2\bigr)$,
and Lemma~\ref{lem:chisandwich} gives
$2\chi_i=\lambda_i+O(\lambda_i^2)$, so
$\chi_i=(\lambda_0+\theta_i)/2+O\bigl((\lambda_0+\theta_i)^2\bigr)$.
Summing over the links in $\beta(P)=\sum_i\chi_i/\ln3$ gives
\begin{equation}\label{eq:budget-eng}
\beta(P)=\beta_{\mathrm{low}}
+O\Bigl(\sum_{i=0}^{N}(\lambda_0+\theta_i)^2\Bigr),
\qquad
\beta_{\mathrm{low}}:=
\frac{(N{+}1)\lambda_0+\sum_{i=0}^{N}\theta_i}{2\ln3},
\end{equation}
where $\sum_i\theta_i=\sum_q t_q/T_q$ runs over all $2(N+1)$
stored qubits. We call $\beta_{\mathrm{low}}$ the low-noise score.
When every memory has the same depolarizing time $T_m$, the leading
term is
\begin{equation}\label{eq:budget-low-Tm}
\beta(P)\approx\beta_{\mathrm{low}}
=\frac{(N{+}1)\lambda_0+\sum_q t_q/T_m}{2\ln3},
\end{equation}
which is the approximation
stated after Eq.~\eqref{eq:budget} in the main text. Replacing
$\beta(P)$ by $\beta_{\mathrm{low}}$ in Eq.~\eqref{eq:beta-nogo}, with
the loads $\theta_i$ realized on the branch, gives the approximate
rejection rule $\beta_{\mathrm{low}}\ge1$.

\paragraph{Accuracy of the low-noise score.}
The bounds below show that $\beta_{\mathrm{low}}$ approximates
$\beta(P)$ with an error, relative to $\beta(P)$, of first order in
$\lambda_0$ and $\theta_{\max}:=\max_i\theta_i$, and that
$\beta_{\mathrm{low}}\ge1/(1-\theta_{\max})$ already implies
$\beta(P)\ge1$.
Inserting $e^{-\theta_i}\le1-\theta_i+\theta_i^2/2$ and
$e^{-\theta_i}\ge1-\theta_i$ into Eq.~\eqref{eq:lam-theta} gives
\begin{align}
\lambda_i&\ \ge\ 1-(1-\lambda_0)\Bigl(1-\theta_i+\frac{\theta_i^2}{2}\Bigr)
=(\lambda_0+\theta_i)(1-\theta_i)+\frac{(1+\lambda_0)\theta_i^2}{2}
\ \ge\ (\lambda_0+\theta_i)(1-\theta_i),\notag\\
\lambda_i&\ \le\ 1-(1-\lambda_0)(1-\theta_i)
=\lambda_0+\theta_i-\lambda_0\theta_i\ \le\ \lambda_0+\theta_i .\notag
\end{align}
Lemma~\ref{lem:chisandwich} gives
$\lambda_i\le2\chi_i\le\lambda_i/(1-\lambda_i)$. The lower bound on
$\lambda_i$ gives the left inequality below. Since $\lambda/(1-\lambda)$
increases with $\lambda$, the upper bound $\lambda_i\le\lambda_0+\theta_i$
gives the right inequality, provided $\lambda_0+\theta_i<1$:
\begin{equation}\label{eq:2chi-sandwich}
(\lambda_0+\theta_i)(1-\theta_i)\ \le\ 2\chi_i\ \le\
\frac{\lambda_0+\theta_i}{1-\lambda_0-\theta_i}
\qquad(\lambda_0+\theta_i<1).
\end{equation}
Since $1-\theta_i\ge1-\theta_{\max}$ and, for
$\lambda_0+\theta_{\max}<1$, $1-\lambda_0-\theta_i\ge1-\lambda_0-\theta_{\max}>0$,
Eq.~\eqref{eq:2chi-sandwich} weakens to
$(\lambda_0+\theta_i)(1-\theta_{\max})\le2\chi_i\le
(\lambda_0+\theta_i)/(1-\lambda_0-\theta_{\max})$. Summing over $i$,
with $\sum_i(\lambda_0+\theta_i)=2\ln3\,\beta_{\mathrm{low}}$ and
$\sum_i2\chi_i=2\ln3\,\beta(P)$, and dividing by $2\ln3$ gives
$$
(1-\theta_{\max})\,\beta_{\mathrm{low}}
\ \le\ \beta(P)\ \le\
\frac{\beta_{\mathrm{low}}}{1-\lambda_0-\theta_{\max}} .
$$
Two conclusions follow. First, since $1-\lambda_0-\theta_{\max}\le1$,
the two bounds combine into
$$
|\beta(P)-\beta_{\mathrm{low}}|
\ \le\ \frac{\lambda_0+\theta_{\max}}{1-\lambda_0-\theta_{\max}}\,
\beta_{\mathrm{low}},
$$
and $\beta_{\mathrm{low}}\le\beta(P)/(1-\theta_{\max})$ by the lower
bound, so the error of $\beta_{\mathrm{low}}$ relative to $\beta(P)$ is
of first order in $\lambda_0$ and $\theta_{\max}$. Second, if
$\beta_{\mathrm{low}}\ge1/(1-\theta_{\max})$, then $\beta(P)\ge1$ by the
lower bound, and every postselected branch is separable by
Eq.~\eqref{eq:beta-nogo}, with no approximation involved.

\paragraph{Purification enters the score only through the noise floors.}
Purification of link $i$ is a finite-round SLOCC protocol between
its two holders acting on $m$ copies of $\rho_i$, and
Sec.~\ref{sm:slocc-reduction} shows that, because $\rho_i$ is of full
rank, on every accepted record it returns a pair $\rho_i'$ that is
separable or again of full rank. If
$\rho_i'$ is separable, every branch of the chain is separable
outright. Otherwise, $\rho_i'$ is an entangled full-rank link, and
Eqs.~\eqref{eq:beta-exact} and~\eqref{eq:beta-nogo} apply unchanged
with the purified noise floor $\lambda_i':=4\mu_{\min}(\rho_i')$ in
place of $\lambda_i$. Purification may lower $\lambda_i'$ below
$\lambda_i$ and so extend the horizon, but cannot remove it, which is
the purification statement below Theorem~\ref{thm:rank4} in the main
text. The score also fixes where purification can act. Applied
to a purely swapped segment of the route, Eq.~\eqref{eq:beta-nogo}
shows that the pair delivered by that segment is separable once the
segment's score reaches one, and a separable pair stays separable
under any SLOCC protocol. Purification after the swaps therefore
recovers nothing, and purification stations must be spaced so that,
at the least, every purely swapped segment between them has score
below one, as stated in the main text.

\section{Reduction from local protocols to pure swapping outcomes}
\label{sm:slocc-reduction}

We prove Corollary~\ref{cor:slocc-horizon} from the pure-outcome
bound already established, and we then prove the purification
statement of the main text, that a finite-round SLOCC protocol on
finitely many copies of a full-rank link returns a pair that is
separable or again of full rank. The reduction changes the allowed
protocol, not the threshold.

\paragraph{Model.}
The input is one copy of each link, that is, the chain state
$\rho_{\mathrm{ch}}:=\bigotimes_{i=0}^{N}\rho_i^{A_iB_i}$, which
reduces to $\rho_0^{\otimes(N+1)}$ for identical links. The left
boundary holds $A_0$, the right boundary holds $B_N$, and node $k$
holds $B_{k-1}A_k$ for $1\le k\le N$. Each party may add local
ancillas, apply local instruments in finitely many rounds,
communicate classically with every other party, and adapt later
instruments to earlier outcomes. We include shared randomness by
conditioning on its value and averaging at the end. No quantum
communication between parties is allowed. This model gives one copy
of each link, and we treat purification separately below. The protocol may accept any set of outcome records, and
all intermediate systems are traced out.

Set $\rho^{(t=0)}=\rho_{\mathrm{ch}}$. For the $i$th Kraus term of a
reported outcome in round $t=1$, let $F_i^{(t=1)}$, $E_{k,i}^{(t=1)}$,
and $G_i^{(t=1)}$ be the local operators at the left endpoint, node $k$,
and the right endpoint, respectively. Then
$$
\Theta_i^{(t=1)}=F_i^{(t=1)}\otimes E_{1,i}^{(t=1)}\otimes\cdots\otimes
E_{N,i}^{(t=1)}\otimes G_i^{(t=1)},
\qquad
\rho^{(t=1)}=\sum_i\Theta_i^{(t=1)}\rho^{(t=0)}
\Theta_i^{(t=1)\dagger}.
$$
Every round $t$ has Kraus operators of this product form because the
operations are local, and products of such operators have the same
form. Thus, every Kraus operator of the complete protocol is of the form
$$
\Theta_\alpha=F_\alpha\otimes E_{1,\alpha}\otimes\cdots\otimes
E_{N,\alpha}\otimes G_\alpha.
$$
Adaptivity changes the operators but not their product form. It
suffices to prove that, after tracing the intermediate systems, each
term $\Theta_\alpha\rho_{\mathrm{ch}}\Theta_\alpha^\dagger$ gives a
separable endpoint state.

For an orthonormal basis $\{\ket{r_k}\}_{r_k}$ of the system at node
$k$, write $\boldsymbol r=(r_1,\ldots,r_N)$ and
$\ket{\boldsymbol r}=\ket{r_1}\otimes\cdots\otimes
\ket{r_N}$. For each $\boldsymbol r$, define
$$
\widetilde\rho_{\alpha,\boldsymbol r}:=
\Tr_{\mathrm{int}}\!\left[
\left(\Id\otimes\ket{\boldsymbol r}\!\bra{\boldsymbol r}\otimes\Id\right)
\Theta_\alpha\rho_{\mathrm{ch}}\Theta_\alpha^\dagger
\right],
\qquad
\widetilde\rho_\alpha
=\sum_{r_1}\cdots\sum_{r_N}\widetilde\rho_{\alpha,\boldsymbol r}.
$$
Absorb the node operators into the basis vectors by defining
$\ket{w_{k,\alpha,r_k}}:=E_{k,\alpha}^\dagger\ket{r_k}$, while
$F_\alpha$ and $G_\alpha$ remain at the endpoints. Then
\begin{equation}
 \begin{aligned}
 \widetilde\rho_{\alpha,\boldsymbol r}
 &=(F_\alpha\otimes G_\alpha)
 \Bigl(\Id_{A_0}\otimes
  \bigotimes_{k=1}^{N}{}_{B_{k-1}A_k}\!\bra{w_{k,\alpha,r_k}}
  \otimes\Id_{B_N}\Bigr)\rho_{\mathrm{ch}}\\
 &\quad\times\Bigl(\Id_{A_0}\otimes
  \bigotimes_{k=1}^{N}\ket{w_{k,\alpha,r_k}}_{B_{k-1}A_k}
  \otimes\Id_{B_N}\Bigr)
  (F_\alpha^\dagger\otimes G_\alpha^\dagger).
 \end{aligned}
 \label{eq:slocc-sum}
\end{equation}
For each nonzero term, the operator between $F_\alpha\otimes G_\alpha$
and $F_\alpha^\dagger\otimes G_\alpha^\dagger$ has the form of
an entanglement-swapping chain with outcome vector
$\ket{w_{k,\alpha,r_k}}$ at every node $k$. If any outcome is product,
this operator is a product term. If all outcomes are entangled,
Theorem~\ref{thm:rank4} applies. Since $F_\alpha$ and $G_\alpha$ are
local, every $\widetilde\rho_{\alpha,\boldsymbol r}$, and therefore
$\widetilde\rho_\alpha$, is separable whenever
Eq.~\eqref{eq:horizon}, or its heterogeneous form, holds. Grouping
terms into any accepted record preserves separability.

\paragraph{Purification.}
Purification of a link is a finite-round SLOCC protocol between the two
holders of that link acting on $m$ copies of it. It is therefore the case
$N=0$ of the model above, with $\rho_{\mathrm{ch}}$ replaced by
$\rho_i^{\otimes m}$, which is positive definite because $\rho_i$ is of
full rank. There are no intermediate nodes to trace out, so the product
form established above reads $\Theta_\alpha=F_\alpha\otimes G_\alpha$,
where $F_\alpha$ and $G_\alpha$ map the $m$ qubits of a holder to its
output qubit, and the retained pair is
$$
\sigma=\sum_{\alpha}(F_\alpha\otimes G_\alpha)\,\rho_i^{\otimes m}\,
(F_\alpha^\dagger\otimes G_\alpha^\dagger),
$$
up to normalization, where the sum runs over the Kraus terms of the
accepted record. As matrices, $F_\alpha$ and $G_\alpha$ have two rows, one per
dimension of the output qubit, and $2^m$ columns. Their rank is at most
two, and rank two means full row rank, that is, surjectivity onto the
output qubit. For $m=1$, they are $2\times2$ matrices, and full row rank
means invertibility. Two cases remain.

\emph{Case 1: some $\alpha$ has
$\operatorname{rank}F_\alpha=\operatorname{rank}G_\alpha=2$.} Then
$F_\alpha\otimes G_\alpha$, a matrix with four rows and $4^m$ columns, has
full row rank, and $X\succ0$ implies $CXC^\dagger\succ0$ for every $C$ of
full row rank, so that term is positive definite on the two output qubits.
The remaining terms are positive semidefinite, so $\sigma$ has full rank.

\emph{Case 2: every $\alpha$ has a factor of rank at most one.} A factor of
rank one has the form $F_\alpha=\ket f\!\bra g$, where $\bra g$ acts on the
$m$ qubits held by one party and $\ket f$ is a vector of that party's
output qubit. Such a factor turns the term into
$\ket f\!\bra f\otimes G_\alpha(\bra g\otimes\Id)\rho_i^{\otimes m}
(\ket g\otimes\Id)G_\alpha^\dagger$, a product operator, and likewise when
$G_\alpha$ has rank one. Every term is then a product operator, so $\sigma$
is separable.

Purification therefore returns a separable pair or again a full-rank pair
$\rho_i'$. A chain of such pairs obeys Theorem~\ref{thm:rank4} with the
noise floors $\lambda_i=4\mu_{\min}(\rho_i')$. Purification may reduce
these floors and extend the horizon, but never removes it.

\subsection*{Part B: Chains of low-rank links can evade the horizon}

\section{\texorpdfstring{Low-rank closure under $\Phi^\pm$ swaps}{Low-rank closure under Phi+/- swaps}}
\label{sm:lowrank}

For $s\in(0,1]$ and $\kappa\in(0,1]$, consider the amplitude-damping and
pure-dephasing channel
$\mathcal E_{s,\kappa}:=\mathcal P_\kappa\circ\mathcal A_s$ of
Example~\ref{ex:lowrank}, where $\mathcal A_s$ is amplitude
damping in which the excited state survives with probability $s$, and
$\mathcal P_\kappa(X):=\frac{1+\kappa}{2}X+
\frac{1-\kappa}{2}\sigma_zX\sigma_z$ is pure dephasing with coherence
factor $\kappa$. Sending one qubit of $\PhiP$ through
$\mathcal E_{s,\kappa}$, we have, in the computational basis, the
elementary link
\begin{equation}\label{eq:lowrank-link}
\rho_{s,\kappa}:=(\II\otimes\mathcal E_{s,\kappa})(\PhiP)
=\frac12\begin{pmatrix}
1 & 0 & 0 & \kappa\sqrt s\\
0 & 0 & 0 & 0\\
0 & 0 & 1-s & 0\\
\kappa\sqrt s & 0 & 0 & s
\end{pmatrix}.
\end{equation}%
This is main-text Eq.~\eqref{eq:ad-link}. At $\kappa=1$, the
pure dephasing channel $\mathcal P_\kappa$ acts as the identity, so
amplitude damping alone produces the link. At $s=1$, the amplitude
damping channel $\mathcal A_s$ acts as the identity, so pure
dephasing alone produces the link. At $s=\kappa=1$, the link is
$\PhiP$.

\begin{theorem}[Low-rank persistence under postselected swapping]
\label{thm:lowrank-persistence}
The rank of $\rho_{s,\kappa}$ is three for $0<s,\kappa<1$, two when
only one of the two channels acts nontrivially, and one at
$s=\kappa=1$. The state is entangled throughout $s,\kappa\in(0,1]$. Chain $N+1$
links $\rho_{s_i,\kappa_i}$ of the family, not necessarily identical,
and retain either the $\Phi^+$ or $\Phi^-$ outcome at every
intermediate Bell measurement, applying a local Pauli-$Z$ correction
for each $\Phi^-$ outcome. The normalized end state, total accepted
probability, and concurrence are
\begin{equation}\label{eq:lowrank-chain}
\rho_N=\rho_{S,K},
\quad
S:=\prod_{i=1}^{N+1}s_i,
\quad
K:=\prod_{i=1}^{N+1}\kappa_i,
\quad
p_N^{(\Phi^\pm)}=2^{-N},
\quad
C\!\left(\rho_N\right)=K\sqrt S>0.
\end{equation}%
For $N+1$ identical copies, this reads
$\rho_N=\rho_{s^{N+1},\kappa^{N+1}}$ with concurrence
$(\kappa\sqrt s)^{N+1}$. Since $S=1$ requires every $s_i=1$ and $K=1$
requires every $\kappa_i=1$, a chain whose links all see the same
channels returns a link seeing those same channels, of the same rank.
Hence, chains of rank-three and of rank-two links both deliver
an entangled state at every finite chain depth.
\end{theorem}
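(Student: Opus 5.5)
The proof rests on the channel picture of Proposition~\ref{prop:onestep}. First I settle the single-link claims directly from Eq.~\eqref{eq:lowrank-link}. The matrix splits into the $\{\ket{00},\ket{11}\}$ block
$\tfrac12\bigl(\begin{smallmatrix}1&\kappa\sqrt s\\ \kappa\sqrt s& s\end{smallmatrix}\bigr)$, which has determinant $s(1-\kappa^2)/4$, and the diagonal entries $0$ on $\ket{01}$ and $(1-s)/2$ on $\ket{10}$. The block has rank two iff $\kappa<1$ (it always has rank $\ge1$, since $s>0$), and the $\ket{10}$ entry contributes one more rank iff $s<1$. This gives ranks three, two, and one in the three cases. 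For entanglement I use the partial transpose: the $\ket{01},\ket{10}$ block of $\rho^{T_B}$ is $\tfrac12\bigl(\begin{smallmatrix}0&\kappa\sqrt s\\ \kappa\sqrt s&1-s\end{smallmatrix}\bigr)$, whose determinant $-\kappa^2 s/4<0$ forces a negative eigenvalue. For the concurrence of this X-state I invoke the standard formula $C=2\max\{0,|\rho_{14}|-\sqrt{\rho_{22}\rho_{33}},|\rho_{23}|-\sqrt{\rho_{11}\rho_{44}}\}$, which gives $2\cdot\tfrac12\kappa\sqrt s=\kappa\sqrt s$ because $\rho_{22}=0$.

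Next comes the chain. A $\Phi^+$ outcome corresponds to $M=I$ and a $\Phi^-$ outcome to $M=\sigma_z$, both satisfying $\Tr(M^\dagger M)=2$, with $\mathcal M(X)=M^*XM^T$ equal to $X$ or $\sigma_zX\sigma_z$. By Eq.~\eqref{eq:chain} (proved via Proposition~\ref{prop:onestep}), the unnormalized branch state is $\tfrac1{4^N}(\II\otimes\mathcal N)(\PhiP)$, where $\mathcal N=\mathcal E_{s_{N+1},\kappa_{N+1}}\circ\mathcal M_N\circ\cdots\circ\mathcal M_1\circ\mathcal E_{s_1,\kappa_1}$. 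The key algebraic facts are these.

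\emph{(i)} $\sigma_z$-conjugation commutes with both $\mathcal A_s$ and $\mathcal P_\kappa$. For $\mathcal P_\kappa$ this is immediate. For $\mathcal A_s$, $\sigma_zK_0\sigma_z=K_0$ and $\sigma_zK_1\sigma_z=-K_1$, so it holds as well.

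\emph{(ii)} The semigroup laws $\mathcal A_s\circ\mathcal A_{s'}=\mathcal A_{ss'}$ and $\mathcal P_\kappa\circ\mathcal P_{\kappa'}=\mathcal P_{\kappa\kappa'}$ hold, together with the commutation $\mathcal A_s\circ\mathcal P_\kappa=\mathcal P_\kappa\circ\mathcal A_s$ already stated in Example~\ref{ex:lowrank}.

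I would verify (ii) on the matrix-unit basis. $\mathcal A_s$ maps $\ket1\!\bra1\mapsto s\ket1\!\bra1+(1-s)\ket0\!\bra0$ and scales the off-diagonals by $\sqrt s$, while $\mathcal P_\kappa$ fixes the diagonals and scales the off-diagonals by $\kappa$; both composition laws are then immediate. Using (i), I push all the $\sigma_z$ conjugations to the output end. This gives $\mathcal N=\mathcal Z^{m}\circ\mathcal E_{S,K}$, where $m$ is the number of $\Phi^-$ outcomes and $\mathcal Z(X)=\sigma_zX\sigma_z$. Applying $\sigma_z^m$ on $B_N$ (the local Pauli-$Z$ corrections) removes $\mathcal Z^m$. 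Normalizing, since $\mathcal E_{S,K}$ is trace preserving, yields $\rho_N=\rho_{S,K}$.

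For the probability, each realized record has unnormalized trace $\Tr\bigl[\tfrac1{4^N}(\II\otimes\mathcal N)(\PhiP)\bigr]=4^{-N}$. There are $2^N$ accepted records, so $p_N^{(\Phi^\pm)}=2^{-N}$. I should double-check the $\tfrac14$ normalization of Eq.~\eqref{eq:onestep-result}, because Bell outcomes each carry probability $1/4$ per node; this consistency check is the only place where care is needed. Everything else is routine. The concurrence formula then gives $C(\rho_N)=K\sqrt S>0$.

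Finally, for the rank-preservation remark: $S=\prod s_i=1$ iff every $s_i=1$, and $K=1$ iff every $\kappa_i=1$, since all factors lie in $(0,1]$. Combining this with the rank classification from the first step shows that the output has the same rank as the links when all links see the same channels. The main obstacle, modest as it is, is step (i) for amplitude damping, together with getting the $\Phi^-$ correction to land on the right qubit. Because $\mathcal Z$ is applied on the $B$ side, the correction must be applied at $B_N$. Alternatively, it can be applied at $A_0$, using $(\sigma_z\otimes\sigma_z)\ket{\PhiP}=\ket{\PhiP}$ and the fact that $\rho_{S,K}$ is invariant under $\sigma_z\otimes\sigma_z$.
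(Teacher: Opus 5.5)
Your proof is correct. The single-link part is the same as the paper's: the rank count from the $\{\ket{00},\ket{11}\}$ block plus the $\ket{10}$ entry, the negative determinant of the partial-transpose block, and the X-state concurrence. For the chain, you take a genuinely different route.

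\emph{The paper's route.} It contracts two links against $\ket{\Phi^\pm}_{BC}$ by hand. This gives the explicit matrix $\widetilde\rho_\pm^{AD}=\tfrac14 Z_\pm\,\rho_{s_1s_2,\kappa_1\kappa_2}\,Z_\pm$, with the correction $Z_-=\sigma_z\otimes I_2$ placed on $A$. The paper reads off the outcome probability $1/4$ from the trace of this matrix and then iterates the one-swap closure law.

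\emph{Your route.} You feed the filters $M\in\{I,\sigma_z\}$ into the Choi-composition identity of Proposition~\ref{prop:onestep} and collapse the whole chain at once. The semigroup laws of $\mathcal A_s$ and $\mathcal P_\kappa$, together with their $\sigma_z$-covariance, give $\mathcal N=\mathcal Z^m\circ\mathcal E_{S,K}$. The probability $4^{-N}$ per record then follows from the exact $\tfrac14$ in Eq.~\eqref{eq:onestep-result} and trace preservation.

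\emph{What each buys.} Your version is more structural. It treats heterogeneous links and every $\Phi^\pm$ record in a single step, without induction. It also explains why the family is closed: the retained filters commute with the link channel. The same observation shows why the $\Psi^\pm$ outcomes are discarded, since their bit-flip filters do not commute with $\mathcal A_s$ for $s<1$. The paper's version is more elementary, needing only one explicit $4\times4$ computation, and it directly shows how $\Psi^\pm$ outcomes populate $\ket{01}$.

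Your observation that $\rho_{S,K}$ is invariant under $\sigma_z\otimes\sigma_z$, so the correction may sit on either $A_0$ or $B_N$, correctly reconciles your placement with the paper's.
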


Direct calculation from Eq.~\eqref{eq:lowrank-link} gives the ranks
stated in the theorem and, on $\operatorname{span}\{\ket{01},\ket{10}\}$,
the partial-transpose block
\begin{equation}\label{eq:lowrank-ppt}
\frac12\begin{pmatrix}
0 & \kappa\sqrt s\\
\kappa\sqrt s & 1-s
\end{pmatrix},
\qquad
\det=-\frac{\kappa^2s}{4}<0,
\end{equation}%
with negative determinant, so $\rho_{s,\kappa}$ is entangled, and the
$X$-state concurrence formula gives $C(\rho_{s,\kappa})=\kappa\sqrt s$.
For two links $\rho_{s_1,\kappa_1}^{AB}$ and $\rho_{s_2,\kappa_2}^{CD}$,
direct calculation of the contraction over the measured qubits $B$ and
$C$ gives
\begin{align}
\widetilde\rho_\pm^{AD}
&:={}_{BC}\!\bra{\Phi^\pm}
\bigl(\rho_{s_1,\kappa_1}^{AB}\otimes
      \rho_{s_2,\kappa_2}^{CD}\bigr)
\ket{\Phi^\pm}_{BC}
=\frac18\begin{pmatrix}
1 & 0 & 0 & \pm\kappa_1\kappa_2\sqrt{s_1s_2}\\
0 & 0 & 0 & 0\\
0 & 0 & 1-s_1s_2 & 0\\
\pm\kappa_1\kappa_2\sqrt{s_1s_2} & 0 & 0 & s_1s_2
\end{pmatrix}\notag\\
&\phantom{:}=\frac14\,Z_\pm
\rho_{s_1s_2,\kappa_1\kappa_2}^{AD}Z_\pm,
\qquad
Z_+=I_4,\quad Z_-=\sigma_z\otimes I_2.
\label{eq:lowrank-swap-law}
\end{align}%
The two $\Psi^\pm$ outcomes populate $\ket{01}$ and so leave the
family, which is why we retain only $\Phi^\pm$. Both retained
outcomes have probability $1/4$, and after the indicated correction
one accepted swap multiplies both $s$ and $\kappa$. Iterating over
the $N$ swaps gives the state and concurrence in
Eq.~\eqref{eq:lowrank-chain}, and accepting two of the four Bell
outcomes at each node gives the total probability $2^{-N}$.

\subsection*{Part C: Experimental Procedure, Uncertainty Analysis, \& Device Calibration Data}

\section{Experimental Procedure}\label{sm:repswaps}
Any 2-qubit density matrix can be reconstructed through an eigen-decomposition approach. The states used in this approach can be built using quantum circuits. To reconstruct the density matrix, we can mesure the circuits independently and classically reconstruct the system.

$$
\rho = \sum_{i} \lambda_i \ket{\psi_i}\!\bra{\psi_i}
\rightarrow \begin{quantikz}[row sep={0.7cm,between origins},column sep=0.3cm]
\lstick{$q_0$} & \qw   & \gate[wires=2]{\ket{\psi_i}}   & \qw         & \meter{}\\
\lstick{$q_1$} & \qw      &\qw &  \qw     & \meter{}\\
\end{quantikz} \rightarrow \rho = \sum_i \lambda_i \rho_i
$$

Assume $\rho_{AD}$ represents the density matrix of One-Sided Amplitude Damping with parameter $p_{AD}$ and $\rho_{DP}$ represents the density matrix of One-Sided Depolarizing channel with parameter $p_{DP}$. To ensure a fair comparison, we calibrate the initial density matrices to have equal concurrence $C_{\mathrm{target}}$ by modifying their respective parameters. Thus, all

$$
\begin{aligned}
p_{\mathrm{AD}}
&\longrightarrow
\rho_{\mathrm{AD}}\!\left(p_{\mathrm{AD}}\right)
\longrightarrow
C_{\mathrm{AD}}
=
C\!\left(\rho_{\mathrm{AD}}\right)
\approx C_{\mathrm{target}},
\\[0.5em]
p_{\mathrm{DP}}
&\longrightarrow
\rho_{\mathrm{DP}}\!\left(p_{\mathrm{DP}}\right)
\longrightarrow
C_{\mathrm{DP}}
=
C\!\left(\rho_{\mathrm{DP}}\right)
\approx C_{\mathrm{target}}.
\end{aligned}
$$

To bypass an exponentially decaying post-selection success, entanglement swapping is restricted within a 4 qubit system. At each swapping round, $q_0q_1$ is prepared as the 2-qubit system from the previous round, while $q_2q_3$ is prepared as a link of the same channel time with concurrence $C_\mathrm{target}$. Bell-State Measurement (BSM) is then performed and we post-select the $\ket{00}$ outcome on $q_1q_2$, entangling $q_0q3$. Lastly, tomography is performed on $q_0q_3$ to reconstruct the 2-qubit density matrix, which is then eigen-decomposed and used as $q_0q_1$ in the next round of swapping.

$$
\begin{tikzpicture}[
    node distance=0.85cm,
    box/.style={
        draw,
        rounded corners,
        align=center,
        minimum width=5.2cm,
        minimum height=0.85cm
    },
    arrow/.style={-{Stealth[length=2mm]}, thick}
]
    \node[box] (input) {
        Previous state $\rho^{(n)}_{01}$
        \quad+\quad
        fresh matched link $\rho^{(0)}_{23}$
    };

    \node[box, below=of input] (bsm) {
        Bell-state measurement on $q_1q_2$
        and post-selection on $00$
    };

    \node[box, below=of bsm] (tomo) {
        Tomography of outer pair $q_0q_3$
    };

    \node[box, below=of tomo] (reconstruct) {
        Reconstruct swapped state $\rho^{(n+1)}_{03}$
    };

    \node[box, below=of reconstruct] (eigen) {
        Eigendecompose $\rho^{(n+1)}$ and prepare its
        pure-state components for the next round
    };

    \draw[arrow] (input) -- (bsm);
    \draw[arrow] (bsm) -- (tomo);
    \draw[arrow] (tomo) -- (reconstruct);
    \draw[arrow] (reconstruct) -- (eigen);

    \draw[arrow]
        (eigen.west)
        -- ++(-1.2,0)
        |-
        node[pos=0.25, left, align=left]
        {\footnotesize use as $\rho^{(n+1)}_{01}$}
        (input.west);
\end{tikzpicture}
$$

\section{Uncertainty Analysis}\label{sm:uncertainty}
Uncertainties in the concurrence values of entanglement swapping were estimated using a resampling of the measured tomography counts. For each tomography circuit, count datasets with the same number of shots were sampled from the experimentally observed outcome distribution recovered from experiment runs. Each dataset underwent the same post-selection, density-matrix reconstruction, and concurrence calculation as the original data.

For $N$ samples with concurrence values $C_n$, the uncertainty reported as an error bar is the sample standard deviation,
$$
    \sigma_C =
    \sqrt{
        \frac{1}{N-1}
        \sum_{n=1}^{N}
        \left(C_n-\overline{C}\right)^2
    },
    \qquad
    \overline{C} = \frac{1}{N}\sum_{n=1}^{N} C_n.
$$

\section{Device Calibration Data}\label{sm:qubits}
Single-qubit and two-qubit calibration parameters for \texttt{ibm\_boston} last updated on 21 July 2026 at 8:57:55 UTC, used for the entanglement-swapping experiments.

\bigskip
\centering
\begin{tabular}{@{}cccc@{}}
\toprule
Qubit & $T_1$ (\si{\micro\second}) &
$T_2$ (\si{\micro\second}) & Measurement error \\
\midrule
86 & \num{204.24} & \num{268.56} & $2.319 \times 10^{-3}$ \\
87 & \num{273.31} & \num{281.93} & $3.052 \times 10^{-3}$ \\
88 & \num{301.02} & \num{384.76} & $2.93 \times 10^{-3}$ \\
89 & \num{273.18} & \num{357.23} & $2.075 \times 10^{-3}$ \\
\bottomrule

\end{tabular}

\vspace{0.7em}

\begin{tabular}{@{}cccc@{}}
\toprule
Qubit 1 & Qubit 2 & CZ Error & RZZ error \\
\midrule
86 & 87 & $1.711 \times 10^{-3}$ & $1.51 \times 10^{-3}$ \\
87 & 88 & $1.114 \times 10^{-3}$ & $1.134 \times 10^{-3}$ \\
88 & 89 & $1.142 \times 10^{-3}$ & $9.544 \times 10^{-4}$ \\
\bottomrule
\end{tabular}

\end{document}